\newif\ifDRAFT 
\definecolor{DarkRed}{rgb}{0.5,0.1,0.1}
\definecolor{DarkBlue}{rgb}{0.1,0.1,0.5}
\definecolor{ForestGreen}{rgb}{0.1333,0.5451,0.1333}
\theoremstyle{plain}
\newtheorem{theorem}{Theorem}[section]
\newtheorem{lemma}[theorem]{Lemma}
\newtheorem{observation}[theorem]{Observation}
\newtheorem{claim}[theorem]{Claim}
\newtheorem{conjecture}[theorem]{Conjecture}
\theoremstyle{definition}
\newtheorem{definition}[theorem]{Definition}
\newtheorem{remark}[theorem]{Remark}
\crefname{equation}{Eqn.}{Eqns.}
\newcommand{\rsquig}{\rightsquigarrow}
\DeclareSymbolFont{bbold}{U}{bbold}{m}{n}
\DeclareSymbolFontAlphabet{\mathbbold}{bbold}
\newcommand{\IGNORE}[1]{}
\newcommand{\poly}{\operatorname{poly}}
\newcommand{\dist}{\operatorname{dist}}
\newcommand{\cover}{\operatorname{cov}}
\newcommand{\depth}{\operatorname{depth}}
\newcommand{\LastE}{\mathtt{LastE}}
\newcommand{\DP}{\mathtt{DP}}
\newcommand{\CFTSP}{\mathtt{CFT\text{-}SP}}
\newcommand{\DELTA}{k}
\title{Color Fault-Tolerant Distance Preservers: \\
\~{O}ptimal Size in Conditionally \~{O}ptimal Time}
    \author{Anonymous}
    \author{
    Merav Parter\thanks{Weizmann Institute. Email: \texttt{merav.parter@weizmann.ac.il}. Supported by the European Research Council (ERC) under the European Union’s Horizon 2020 research and
    innovation programme, grant agreement No. 949083.}
    \and
    Asaf Petruschka\thanks{Weizmann Institute. Email: \texttt{asaf.petruschka@weizmann.ac.il}.  Supported by an Azrieli Foundation fellowship.}
    }
\date{}
\begin{document}

\maketitle

\pagenumbering{roman}
\begin{abstract}
    
    We revisit the problem of \emph{fault-tolerant (FT) distance preservers}, when failure events in the network admit a form of correlation modeled as \emph{color faults}.
    
    FT distance preservers are sparse subgraphs that preserve distances between specified pairs of vertices, even after some edge or vertex failures occur.
    In the classical fault model, any set of at most $\DELTA$ edges or vertices might fail (where $\DELTA \geq 1$ is a given parameter).
    Despite extensive research, the classical model admits significant and tantalizing gaps, both in terms of sparsity bounds and of algorithmic efficiency.
    
    In this work, we study the problem in the recently introduced \emph{color fault-tolerant (CFT)} model:
    the given graph $G=(V,E)$ has arbitrary colors on its edges/vertices where each color appears at most $\DELTA$ times, and is susceptible to \emph{color faults}, where the failure of color $c$ causes all the $c$-colored elements to crash.
    Our main contribution is in the \emph{multi-source} setting, where $G$ has a source-set $S \subseteq V$, and the CFT preserver should preserve $S \times V$ distances under any single color fault.
    We show the following results (where $n = |V|$, $m = |E|$):
    \begin{itemize}
        \item There exists a CFT distance preserver $H$ of $G$ with $\tilde{O}(n^{2 - \frac{1}{\DELTA+1}} \cdot |S|^{\frac{1}{\DELTA+1}} )$ edges.\footnote{The $\tilde{O}(\cdot)$ notation suppresses poly-logarithmic factors in $n$.}
        
        \item The above sparsity bound is worst-case optimal up to polylogarithmic terms.
        
        \item There is a combinatorial randomized algorithm that produces a preserver $H$ whose size meets the above optimal sparsity bound, with running time of $\tilde{O}(m \cdot n^{1 - \frac{1}{\DELTA+1}} \cdot |S|^{\frac{1}{\DELTA+1}})$.
        
        \item The above running time is conditionally optimal: a polynomial improvement 
        would refute the combinatorial Boolean Matrix Multiplication (BMM) conjecture.
        Furthermore, the running time remains optimal even if we only require \emph{mild} sparsification to $m^{1-\epsilon}$ edges.
    \end{itemize}
Our (conditionally) tight algorithm relies on a new approach for compressing fault-tolerant distance information in the presence of nearby faults, using the tool of sparse neighborhood covers. We believe that this technique may have further applications in the study of fault-tolerant graph sparsification.

\end{abstract}
\newpage
\tableofcontents
\newpage
\pagenumbering{arabic}

\section{Introduction}

Let $G=(V,E)$ be an undirected unweighted graph and $P \subseteq V \times V$ be a set of vertex pairs.
A subgraph $H = (V,E')$ of $G$ is called a \emph{$P$-distance preserver} if $\dist_H (u,v) = \dist_G (u,v)$ for every $(u,v) \in P$.
Since their introduction by Coppersmith and Elkin~\cite{CoppersmithE06}, the graph-theoretic and algorithmic aspects of sparse distance preservers and their numerous applications have been extensively studied~\cite{Bodwin19,Bodwin21,Bodwin22,BodwinW21,Pettie09,BodwinW15,ElkinP16,ChangGMW18,CyganGK13,NeimanS24,HoppenworthXX25}.
However, the above definition falls short when it comes to the crucial aspect of \emph{fault tolerance}.
In real-world networks, some nodes (vertices) or links (edges) might occasionally fail, and it is desirable for the distances in the surviving network to still be preserved in the surviving subgraph.
This led researchers to introduce and study the natural notion of fault-tolerant (FT) distance preservers~\cite{BodwinGPW17,ParterP13,Parter17,ParterP18,ParterP20,ChechikLPR09,DinitzK11}.
The following definition refers to failing vertices (the obvious modification to failing edges appears in parentheses):
the subgraph $H$ is called $\DELTA$-FT $P$-distance preserver if $\dist_{H-F} (u,v) = \dist_{G-F} (u,v)$ for every $(u,v) \in P$ and $F \subseteq V$ ($F \subseteq E$) of size $|F| \leq \DELTA$. 

As stated by~\cite{BodwinGPW17}, finding sparse FT preservers turns out to be a very challenging task. 
Indeed, despite extensive research on the graph-theoretical and algorithmic aspects of $\DELTA$-FT distance preservers, tantalizing gaps and questions remain open to date, some resisting progress for over a decade.
This work takes a different approach: we substitute the $\DELTA$-FT model by the recently-introduced \emph{color faults} model~\cite{PetruschkaST26,PetruschkaST24,ParterPST25}, inspired by older notions such as \emph{shared risk resource groups}~\cite{CoudertDPRV07,Kuipers12}, \emph{hedge connectivity}~\cite{GhaffariKP17} and \emph{label cuts}~\cite{ZPT11}.
Intuitively, the goal is accounting for a form of correlation or dependence between failures; these are neglected by the classical $\DELTA$-FT setting allowing \emph{any} set of up to $\DELTA$ vertices/edges to fail.

We assume that $G$ comes with arbitrary colors on its vertices or its edges, where each color appears at most $\DELTA$ times. (Neighboring vertices/edges can have the same color.)
These represent the ``simplest'' form of dependence between individual failures: elements of the same color fail together.
The failure events to be supported are (single) color faults, where a faulty color $c$ leaves us with the graph $G-c$ where all vertices/edges of color $c$ are failing (i.e., deleted from $G$).
We get the following definition for color fault-tolerant (CFT) preservers:

\begin{definition}
    Given graph $G=(V,E)$ whose vertices (edges) are colored such that each color appears at most $\DELTA$ times, and pairs $P \subseteq V \times V$, subgraph $H = (V,E')$ of $G$ is a \emph{CFT $P$-distance preserver} if for every color $c$ and every $(u,v) \in P$, it holds that $\dist_{G-c} (u,v) = \dist_{H-c}(u,v)$.
\end{definition}

We note that the disjointness of color classes is not crucial for us: all of our results trivially extend if each vertex/edge is assigned with some constant number $\ell \geq 1$ of colors, and we focus on $\ell = 1$ just for sake of readability.
However, the assumed bound $\DELTA$ on the size of color classes is unavoidable: if one allows arbitrarily large color classes, then our results (presented next) show that the CFT preservers in consideration cannot provide meaningful sparsification.

\paragraph{Main Results.}

Our main focus is \emph{multi-source} fault-tolerant distance preservers, where there is a set $S \subseteq V$ of sources and $P = S \times V$.
This setting was first considered by Peleg and Parter~\cite{ParterP13}, and has since attracted considerable attention~\cite{Parter15,ParterP15,GhaffariP16,GuptaK17,BodwinGPW17,Parter20,BodwinP23}.
The most fundamental problem is of a \emph{graph-theoretic} nature: determining the (worst-case) size/sparsity of such preservers.
Parter~\cite{Parter15} showed a lower bound of $\Omega(n^{2-\frac{1}{\DELTA+1}} \cdot |S|^{\frac{1}{\DELTA+1}})$ on the number of edges these preservers require in the $\DELTA$-FT model, for every $\DELTA \geq 1$.
However, a matching upper bound is known only for $\DELTA \in \{1,2\}$~\cite{ParterP13,Parter15,GuptaK17}.
When $\DELTA \geq 3$, the state-of-the-art upper bound is $O(n^{2-\frac{1}{2^\DELTA}} \cdot |S|^{\frac{1}{2^\DELTA}})$, given by~\cite{BodwinGPW17,BodwinP23}.
Resolving this gap for $\DELTA$-FT is a major open problem in the field, which poses challenging barriers as demonstrated by~\cite{BodwinP23}.
However, in the color faults model, we are able to fully resolve the sparsity question, up to logarithmic factors:

\begin{theorem}\label{thm:size-upper-bound}
    For any $n$-vertex colored graph $G = (V,E)$ with color classes of size at most $\DELTA$, and any $S \subseteq V$, there exists a CFT $S \times V$  distance preserver with 
    $\tilde{O}(n^{2-\frac{1}{\DELTA+1}} \cdot |S|^{\frac{1}{\DELTA+1}})$ edges.
\end{theorem}

\begin{theorem}[Informal statement of~\Cref{thm:size-lower-bound}]
    The bound in~\Cref{thm:size-upper-bound} is worst-case optimal, up to polylogarithmic factors.
\end{theorem}

Another important aspect of research is \emph{algorithmic}, namely, how efficiently can one compute such sparse preservers.
The current best for $\DELTA$-FT multi-source preservers is the $\tilde{O}(mn)$ time algorithm of~\cite{BodwinGPW17}, producing a preserver with $\tilde{O}(n^{2-\frac{1}{2^\DELTA}} \cdot |S|^{\frac{1}{2^\DELTA}})$ edges as mentioned before;
for $\DELTA = 2$, \cite{Parter15,GuptaK17} give optimal size guarantees of $\tilde{O}(n^{5/3} \cdot |S|^{1/3})$ edges, but with larger running time.
In this work, we settle the time complexity of \emph{combinatorial} algorithms in the color faults model.
The term ``combinatorial algorithms'' is loosely defined, colloquially referring to algorithms that do not use fast matrix multiplication tricks. 
Combinatorial algorithms are of significant interest from both practical and theoretical perspectives, see~\cite{AbboudFKLM24} for a comprehensive discussion.
A fundamental conjecture 
regarding these is on \emph{combinatorial Boolean Matrix Multiplication (BMM)}, asserting that no combinatorial algorithm can polynomially improve over the na\"ive one (see precise statement in~~\Cref{conj:BMM}).
We give a combinatorial algorithm for computing multi-source CFT distance preservers with optimal size bound, whose improvement would refute this conjecture:

\begin{theorem}\label{thm:alg-upper-bound}
    There is a combinatorial randomized algorithm with the following properties.
    Given a colored graph $G = (V,E)$ with $n$ vertices and $m$ edges, where each color appears at most $\DELTA$ times in $G$, and a source set $S \subseteq V$, the algorithm runs in $\tilde{O}(m \cdot n^{1-\frac{1}{\DELTA+1}} \cdot |S|^{\frac{1}{\DELTA+1}})$ time and outputs w.h.p.\footnote{The term w.h.p.\ (with high probability) means with probability $1 - n^{-q}$ for an arbitrarily large constant $q$.}\ a CFT $S \times V$ distance preserver $H$ with $|E(H)| = \tilde{O}(n^{2-\frac{1}{\DELTA+1}} \cdot |S|^{\frac{1}{\DELTA+1}})$ edges.
\end{theorem}

\begin{theorem}[Informal statement of~\Cref{thm:alg-lower-bound}]\label{thm:alg-lower-bound-informal}
    If there is a combinatorial algorithm for computing CFT multi-source distance preservers that runs polynomially faster than in~\Cref{thm:alg-upper-bound} and guarantees mild (polynomial) sparsification, then the combinatorial BMM conjecture is false.
\end{theorem}
\noindent
The design of our (conditionally) tight algorithm introduces a new way to compress fault-tolerant distance information in the face of nearby faults, through \emph{sparse neighborhood covers}~\cite{ABCP98}. We believe this approach has the potential to extend to other problems in fault-tolerant graph sparsification.
We mention that combinatorial lower bounds of a similar flavor have been shown for the closely-related problem of \emph{replacement paths}~\cite{WilliamsWX22,WilliamsW18,ChechikC19,GuptaJM20}.
Specifically, Vassilevska Williams, Woldeghebriel and Xu~\cite{WilliamsWX22} have shown a lower bound for combinatorial data structures for replacement paths under $\DELTA$ edge/vertex faults in directed graphs between a fixed pair of vertices.
Our lower bound in~\Cref{thm:alg-lower-bound-informal} can also be interpreted as a similar data structure lower bound for undirected graphs in the single/multi-source setting (see~\Cref{remark:data-structure} for details).

\paragraph{Other CFT Preservers.}
To complement our main results, in~\Cref{sec:other-CFT} we explore the landscape of other CFT preservers: \emph{single-pair} or \emph{pairwise} distance preservers, \emph{$+2$ spanners} and \emph{reachability}  or \emph{bounded flow} preservers.
These are interesting to view from the perspective of ``utilizing'' the correlation of failures encoded by the color structure.
For example, for single-pair distance preservers in weighted graphs, no meaningful sparsification guarantee is possible already with $\DELTA=2$ edge/vertex faults~\cite{BodwinGPW17}.
However, in the CFT model with color classes of size up to $\DELTA = O(\log n)$, our results imply an upper bound of $\tilde{O}(n^{3/2})$.
On the other hand, for single-source reachability preservers, we prove that the sparsity in the CFT setting does not improve on $\DELTA$-FT.

\paragraph{Concluding Remarks.} 
The size upper bound of~\Cref{thm:size-upper-bound} also holds when $G$ is \emph{directed} (we prove this in~\Cref{sec:multi-source-directed}).
However, our algorithm in~\Cref{thm:alg-upper-bound} does not directly work for directed graphs, and extending it is an interesting open problem.
As the main ``undirected ingredient'' is the neighborhood cover, it is plausible that directed LDD (low diameter decomposition) could be useful for this purpose.
Another challenge is \emph{derandomization}: here, the main barrier is the use of oblivious hitting sets to hit yet-unknown replacement paths, a key randomized ingredient which is recurrent in this context~\cite{BodwinGPW17,ChechikC19,GuptaJM20,ChechikM20,ChechikZ24}.
Finally, while our work settles the complexity of combinatorial algorithms, the study of \emph{non-combinatorial} algorithms relying on fast matrix multiplications is an intriguing future direction.

\section{Preliminaries}

We use the following notations for the vertex (resp., edge) colored graph $G = (V,E)$.
As usual, we denote $n = |V|$ and $m = |E|$.
The color of vertex (resp., edge) $x$ is denoted by $c(x)$.
The set of all $c$-colored vertices (resp., edges) is $V_c$ (resp., $E_c$).
The notation $G-c$ is a shorthand for $G-V_c$ (resp., $G-E_c$).
We assume that each color appears at most $\DELTA$ times in $G$.
For two vertices $u,v \in V$, $\pi(u,v)$ denotes a shortest $u \rsquig v$ path in $G$, and $\pi(u,v \mid c)$ denotes a shortest $u \rsquig v$ path in $G-c$ where $c$ is some color.
The last edge of $\pi(u,v)$, i.e.\ the edge touching $v$, is denoted by $\LastE(u,v)$.
In a similar fashion, $\LastE(u,v \mid c)$ is the last edge of $\pi(u,v \mid c)$.
We also need notations for balls: the $r$-radius ball of a vertex set $U \subseteq V$ in the graph $G$ is $B_G (U,r) = \{v \in V \mid \dist_G(u,v) \leq r \} = \bigcup_{u \in U} B_G (u,r)$.

\section{Algorithm for Multi-Source CFT Distance Preservers}

In this section, we give our main result, providing the algorithm stated in~\Cref{thm:alg-upper-bound} (note that this algorithm also proves the size bound stated in~\Cref{thm:size-upper-bound}).
Throughout, we fix the given sources $S \subseteq V$.
We assume that $\DELTA = O(\log n)$ as otherwise, the size bound of~\Cref{thm:alg-upper-bound} is trivially $O(n^2)$ and we can just output $H = G$.
Also, we assume that $m \geq n^{2-\frac{1}{\DELTA+1}}
|S|^{\frac{1}{\DELTA+1}}$, as otherwise we can again output $H=G$.
We give the algorithm for the case where $G$ has colored vertices; this is a somewhat arbitrary choice, as the proof for colored edges is virtually the same.\footnote{We chose to show the proof with failing vertices rather than edges because in some fault-tolerant settings, vertex faults are considered harder to deal with than edge faults.
However, except for~\Cref{sec:single-pair}, the proofs in this paper are trivially translated from vertex colors to edge colors and vice versa.}
For convenience, one may assume that shortest paths in $G$, and in $G-c$ for any color $c$, are unique (although this is not necessary for us).
This can be achieved by treating the edges of $G$ as having initially weight $1$, then perturbing the weight of each edge by adding small random noise.

\subsection{Framework and Overview}\label{sec:framework}

To build a valid CFT $S \times V$ distance preserver $H$, our algorithm must include in $H$ a shortest $s \rsquig t$ path from $G-c$, for every triplet $(s,t,c) \in S \times V \times C$.
But in fact, it has been commonly observed (e.g.,~\cite{ParterP16,Parter15,GuptaK17,BodwinGPW17}) that including only the \emph{last edge} of every such path suffices.

\begin{observation}[Last edge observation]\label{obs:last-edge}
    Suppose $H$
    includes the last edge of a shortest $s \rsquig t$ path in $G-c$, for every $(s,t,c) \in S \times V \times C$.
    Then $H$ is a CFT $S \times V$ distance preserver of $G$.
\end{observation}
\begin{proof}
    We show that if $d := \dist_{G-c} (s,t)$, then $H-c$ contains an $s \rsquig t$ path of length $d$, by induction on $d$.
    The base case $d = 0$ is trivial.
    When $d > 0$, the last edge $(t',t)$ of a shortest $s \rsquig t$ path in $G-c$ is also in $H-c$ by assumption.
    As $\dist_{G-c} (s,t') = d-1$, by induction hypothesis, $H-c$ contains an $s \rsquig t'$ path of length $d-1$. 
    Concatenating $(t',t)$ gives the desired $s \rsquig t$ path.
\end{proof}
\noindent
Thus, the algorithm's goal is to include $\LastE(s,t\mid c)$ in $H$ for every triplet $(s,t,c) \in S \times V \times C$.

We define decreasing \emph{distance thresholds} by
$d_i := (n/|S|)^{1-\frac{i}{\DELTA+1}}$ for $i=1,2,\dots,\DELTA$.
Additionally, we define two ``trivial thresholds'' 
by
$d_0 := n$ and $d_{\DELTA+1} := 1$.
We use the following terminology:
    Two vertices $u,v \in V$ are said to be \emph{$i$-near} if $\dist_G (u,v) \leq 2d_i$, and \emph{$i$-far} otherwise.
The factor $2$ 
is due to technical reasons, which will become apparent shortly (in the proof of~\Cref{lem:level-i}).

Next, for each nontrivial threshold $d_i$, $i = 1,2 \dots, \DELTA$, we construct a corresponding \emph{hitting set} $A_i \subseteq V$ of size
$|A_i| = O(\frac{n}{d_i} \log n) = O(n^{\frac{i}{\DELTA+1}} \cdot |S|^{1 - \frac{i}{\DELTA+1}} \log n)$,
with the following property: 
\begin{enumerate}[label=\textbf{(P)}]
    \item For every $s \in S$, $t \in V$, and color $c$, if $P$ is a contiguous subpath of $\pi(s,t \mid c)$ of length $|P| \geq d_i$, then $P$ contains some vertex from $A_i$.
    \label{prop:hitting-set}
\end{enumerate}
This is achieved by taking $A_i$ to be a random vertex subset of size $\Theta(\frac{n}{d_i} \cdot \log n)$, ensuring 
Property~\ref{prop:hitting-set} w.h.p.\ by standard hitting set arguments. 
We call the vertices in $A_i$ the \emph{$i$-landmarks}.
Note that $|A_i|$ is increasing with $i$.
We also define the $0$-landmarks as $A_0 = S$, and there are no $(\DELTA+1)$-landmarks.

\paragraph{Levels.}
The algorithm works in $\DELTA+1$ \emph{levels}, numbered as $i=0,1, \dots, \DELTA$.
Each level only ``cares about'' including the last edges corresponding to triplets $(s,t,c)$ such that the distances between the target $t$ and the vertices of color $c$ are well-structured, as in the following definition.
\begin{definition}[Level-$i$ triplet]
    A triplet $(s,t,c) \in S \times V \times C$ is \emph{in level $i$}, $0 \leq i \leq \DELTA$ if there are $i$ vertices of color $c$ that are $i$-far from $t$, and all other vertices of color $c$ are $(i+1)$-near to $t$. 
\end{definition}

The strategy of dividing into levels indeed suffices for including the last edges for all triplets, as shown by the following observation:

\begin{observation}
    Every triplet $(s,t,c) \in S \times V \times C$ is found in some level $i$, $0 \leq i \leq \DELTA$. 
\end{observation}

\begin{proof}
Let $x_1, \dots, x_\ell$ be the vertices of color $c$, ordered such that $\dist_G (x_1, t) \geq \cdots \geq \dist_G (x_\ell, t)$.
Recall that $\ell \leq \DELTA$.
    Let $i = \min \{ 0 \leq j \leq \ell \mid \text{$j=\ell$, or $\dist_G (x_{j+1}, t) \leq 2d_{j+1}$}\}$.
    Then
    \[
    \dist_G (x_1, t) \geq \cdots \geq \dist_G (x_i, t) > 2d_i > 2d_{i+1} \geq \dist_G (x_{i+1}, t) \geq \cdots \geq \dist_G (x_\ell, t),
    \]
    so $x_1, \dots, x_i$ are $i$-far from $t$, and $x_{i+1}, \dots, x_\ell$ are $(i+1)$-near to $t$.
    So, $(s,t,c)$ is in level $i$.
\end{proof}

The main tool we use to exploit the nice distances structure of $i$-level triplets is formalized in the following~\Cref{lem:level-i}; while its proof is simple, it plays a crucial role in our algorithm.
Intuitively, it allows us to convert a level-$i$ triplet $(s,t,c)$ into a corresponding \emph{highly-structured} triplet $(z,t,c)$ where $z \in A_i \cup S$, such that $\LastE(s,t \mid c) = \LastE(z,t \mid c)$.
Thus, we think of $A_i \cup S$ as the sources for level $i$, or \emph{$i$-sources} for short.
Instead of treating level-$i$ triplets $(s,t,c)$ directly, we will treat all those highly-structured triplets $(z,t,c) \in (A_i \cup S) \times V \times C$ and aim to include the last edges corresponding to them.

While this approach requires us to consider a larger number of sources ($A_i \cup S$ instead of $S$), it comes with major benefits reflected in the structure.
First, for a given $i$-source $z$, we will only care about targets $t$ lying inside the ball $B_G (z, d_i)$.
Second, we will also only care about colors that appear at most $\DELTA-i$ times in this ball $B_G (z,d_i)$, so the ``effective size'' of the faulty set of vertices gets smaller.
Furthermore, all these appearances of color $c$ are \emph{concentrated} in a much smaller ball, of radius roughly $d_{i+1}$, around the target.

\begin{lemma}\label{lem:level-i}
    Let $(s,t,c) \in S \times V \times C$ in level $i$, $0 \leq i \leq \DELTA$.
    Then,
    there is an $i$-source $z \in A_i \cup S$ such that:

    \begin{enumerate}[label=(\arabic*), itemsep=0pt]
        \item $z$ lies on the $d_i$-suffix of $\pi(s,t \mid c)$, and in particular $\LastE(s,t \mid c) = \LastE(z, t \mid c)$.

        \item If $x \in B_G (z, d_i)$ has color $c$, then $x$ is $(i+1)$-near to $t$.

        \item If $x,x' \in B_G (z, d_i)$ have color $c$, then $\dist_G (x,x') \leq 4d_{i+1}$.

        \item The number of $c$-colored vertices in $B_G (z, d_i)$ is at most $\DELTA-i$.
    \end{enumerate}

\end{lemma}
\begin{proof}
    If $\dist_{G-c} (s,t) \leq d_i$ we take $z = s$, and otherwise we can take some $z \in A_i$ found on the $d_i$-suffix of $\pi(s,t \mid c)$, which exists by property~\ref{prop:hitting-set}.  
    Then (1) clearly holds.
    For (2), observe that $\dist_G (x,t) \leq \dist_G (x,z) + \dist_G (z,t) \leq 2d_i$, so $x$ is $i$-near to $t$.
    Since $(s,t,c)$ is in level $i$,
    any $c$-colored vertex that is not $i$-far from $t$ must be $(i+1)$-near to $t$, hence $x$ is $(i+1)$-near to $t$. 
    Item (3) follows immediately from (2) by triangle inequality.
    Finally, (4) follows from (2): indeed, because $(s,t,c)$ is in level $i$, there are $i$ vertices of color $c$ that are $i$-far from $t$.
    As the $c$-colored vertices in $B_G(z,d_i)$ are $(i+1)$-near to $t$, they can be at most $\DELTA-i$.
\end{proof}

We are now ready to give an overview of the algorithm.
First, we explain how the last level is easily dealt with, and afterwards we only focus on the other levels.
As a first step, we give a simple algorithm that achieves near-optimal size for the output, but has suboptimal running time.
Then, we explain the high-level ideas for improving the running time to near-optimal.

\paragraph{Handling the Last Level.}
Item 4 in \Cref{lem:level-i} makes the last level, i.e., $i=\DELTA$, particularly easy to deal with:
We simply compute the SSSP (single-source shortest path) trees rooted at each $\DELTA$-source $z \in A_{\DELTA} \cup S$, and include all these trees in the output $H$.
This adds $O(n \cdot |A_{\DELTA} \cup S|)$ edges and takes $O(m \cdot |A_{\DELTA} \cup S|)$ time, which is easily verified to be within our budget.
To see the correctness, consider some level-$\DELTA$ triplet $(s,t,c)$.
By~\Cref{lem:level-i}, there is some $z \in A_{\DELTA} \cup S$ on the $d_{\DELTA}$-suffix of $\pi(s,t \mid c)$ (item 1), and $B_G (z, d_{\DELTA})$ contains no $c$-colored vertices (item 4).
Hence, the $d_{\DELTA}$-length suffix of $\pi(s,t \mid c)$ is simply $\pi(z,t)$, which is the $z \rsquig t$ path in the tree rooted at $z$.

\paragraph{Achieving Near-Optimal Size.} 
Let us fix some non-last level $0 \leq i \leq \DELTA-1$.
As in the last level, we still compute SSSP trees from the $i$-sources as a preprocessing step:
For every $i$-source $z \in A_i \cup S$, we execute SSSP from $z$ in $G$, but trim it at distance $d_i$;
The resulting rooted tree $T(z)$ is the shortest path tree from $z$ inside the ball $B_G (z, d_i)$.
We add all trees $T(z)$ to the output $H$.
Again, this adds $O(n \cdot |A_i \cup S|)$ edges and takes $O(m \cdot |A_i \cup S|)$ time, which is within budget.

Now, fix an $i$-source $z \in A_i \cup S$.
By~\Cref{lem:level-i}(1), we only care about adding $\LastE(z,t \mid c)$ for triplets $(z,t,c)$ where $t \in B_G (z, d_i)$, i.e., $t \in T(z)$.
In this case, $\pi(z,t)$, which is the $T(z)$-path from the root $z$ to $t$, is already in $H$.
So, if $\pi(z,t)$ avoids the color $c$, then $(z,t,c)$ is already taken care of.
We therefore assume that $t$ has some ancestor $x$ of color $c$ in $T(z)$.
But by~\Cref{lem:level-i}(2) we have $\dist_G (x,t) \leq 2d_{i+1}$, meaning $x$ is in fact within the $2d_{i+1}$ \emph{nearest} ancestors to $t$.

In other words, to treat the $i$-source $z$, it's enough to add $\LastE(z, t \mid c)$ whenever the target $t$ sees the color $c$ among its $2d_{i+1}$ nearest ancestors in $T(z)$.
Thus, each target $t$ only causes the insertion of up to $2d_{i+1}$ edges when processing a specific $i$-source $z$, hence $|A_i \cup S| \cdot 2d_{i+1}$ edges overall (in level $i$).
Summing over $t \in V$, the total number of added edges in level $i$ is at most $n \cdot |A_i \cup S| \cdot 2d_{i+1} = \tilde{O}(n^2 \cdot \frac{d_{i+1}}{d_i}) =  \tilde{O} (n^{2-\frac{1}{\DELTA+1}} |S|^{\frac{1}{\DELTA+1}} )$, as required.

By the discussion above, we get the following simple algorithm for level $i$:
For each $i$-source $z \in A_i \cup S$ and color $c$, run SSSP from $z$ in $G-c$, but add to $H$ only the last edges of the shortest paths $\pi(z,t \mid c)$ such that $t$ has the color $c$ among its $2d_{i+1}$ nearest ancestors in $T(z)$.
This simple algorithm is already enough to ensure the output $H$ is a CFT $S \times V$ distance preserver of the required size.
However, the running time is $O(m \cdot |A_i \cup S| \cdot n)$, which in level $\DELTA-1$ becomes $\tilde{O}(m n^{2-\frac{2}{\DELTA+1}} |S|^{\frac{2}{\DELTA+1}})$.
This is larger than our desired running time by a factor of $n^{1-\frac{1}{\DELTA+1}}|S|^{\frac{1}{\DELTA+1}}$.

\paragraph{Achieving Near-Optimal Time.}
The main challenge lies in improving the running time to near-optimal.
Roughly, the idea is to harness the same argument we used for bounding the \emph{size} of the output $H$, to also effectively bound the \emph{running time}: if we could restrict each edge to participate in only $\tilde{O}(d_{i+1})$ SSSP computations from a specific $i$-source $z$, we would get total running time of $\tilde{O}(m \cdot |A_i \cup S| \cdot d_{i+1}) = \tilde{O}(m n^{1-\frac{1}{\DELTA+1}} |S|^{\frac{1}{\DELTA+1}})$ which is precisely our budget.
Intuitively, to achieve this, we would like the SSSP computation from $z$ in $G-c$ to only use edges incident on vertices $t$ such that the color $c$ appears on their nearest $\tilde{O}(d_{i+1})$ ancestors; these ``allowed'' edges are the only ones we might actually need to add into $H$ as a result of this computation.
Clearly, this intuition doesn't work as is: e.g., it could be that all edges incident on $z$ are not allowed to be used during the SSSP computation from $z$, which is absurd.

To make it work, instead of executing the SSSP directly on $G-c$, we construct a \emph{shortcutted graph} $\hat{G}(z,c)$ that has the allowed original edges, plus extra weighted \emph{shortcut edges} which represent safe paths avoiding the color $c$.

We put two types of shortcuts in $\hat{G}(z,c)$.
The simpler ones represent paths in $T(z)$ that are unaffected by the failure of color $c$. 
Roughly speaking, these simple shortcuts let the SSSP computation ``get inside'' the subtrees of the $c$-colored vertices.
However, the allowed edges are found only in the \emph{shallow} layers of these subtrees (the first $\tilde{O}(d_{i+1})$ layers), while shortest paths from $z$ avoiding $c$ may need to travel into \emph{deep} layers, even if they end in the shallow ones.

The second, more complicated type of shortcuts is designed 
to represent such traversals in the deep layers.
These use the fact that the $(i+1)$-landmarks $A_{i+1}$ must appear on the path at least once every $d_{i+1}$ steps.
The idea is to create a \emph{shell} of additional $\tilde{O}(d_{i+1})$ layers beneath the original shallow layers, which we now call the \emph{kernel} layers.
We also allow the edges touching the shell to be used, which is within budget.
Now, to go from between the kernel layers and the deep layers beneath the shell, one must walk through the shell for at least $d_{i+1}$ steps, and hit some $(i+1)$-landmark there.
Thus, we can add shortcuts to the $(i+1)$-landmarks in the shell, which represent paths that entirely avoid the kernel (and hence the color $c$); these take care of representing the traversal into deep layers.

As it turns out, the shortcut edges are few enough so that they do not hinder the overall running time of the SSSP computations in the shortcutted graphs $\hat{G}(z,c)$.
However, the second-type shortcuts present a problem: how do we compute their weights in the first place?
We need those weights just to \emph{construct} $\hat{G}(z,c)$.
A priori, this requires computing distance in $G$ minus the kernel layers, which \emph{depend on the color} $c$, and is therefore too costly.
We overcome this obstacle by using (a variant of) the well-known tool of \emph{sparse neighborhood covers}.
Essentially, each cluster in the neighborhood cover will be used to define the kernel layers and shell layers for many different colors $c$, and thus we can compute the shortcuts for avoiding the kernels of the different clusters (instead of different colors), which can be done efficiently.

\subsection{Algorithm Description for Level $i\leq \DELTA-1$}

From now on, we focus on the algorithm for level $i$, which ensures that $\LastE(s,t \mid c)$ is included in $H$ for every level-$i$ triplet $(s,t,c)$.
In its analysis, we will show that it takes $\tilde{O}(mn^{1-\frac{1}{\DELTA+1}}|S|^{\frac{1}{\DELTA+1}})$ time while adding $\tilde{O}(n^{2-\frac{1}{\DELTA+1}}|S|^{\frac{1}{\DELTA+1}})$ edges to $H$, so summing over all $\DELTA = O(\log n)$ levels yields the required running time and size bounds of the entire algorithm.

\paragraph{Step 1: Shortest-Path Trees from $i$-Sources.}
For each $i$-source $z \in A_i \cup S$, we execute SSSP in $G$ from the root $z$.
Let $T(z)$ be the resulting shortest-path tree, but trimmed at depth $d_i$.
We add all trees $T(z)$ to the output $H$.
This adds $O(n \cdot |A_i \cup S|)$ edges and takes $O(m \cdot |A_i \cup S|)$ time, which is easily verified to be within our budget.
Note that this step also yields the distance $\dist_G (z, v)$ between any $i$-source $z \in A_i \cup S$ and any vertex $v$.
We will later use these distances as weights for shortcuts in the shortcutted graphs, as explained in the overview.

\paragraph{Step 2: Neighborhood Cover.}
An \emph{$r$-neighborhood cover} is a collection of vertex-subsets called \emph{clusters}, such that every vertex has its $r$-radius ball 
contained in 
some cluster.
The quality of a neighborhood cover is measured by two parameters that one wishes to minimize:
The \emph{diameter} of the clusters, and their \emph{overlap} which is the maximum number of clusters having some vertex common to all of them.
For our purposes, we need a variant of neighborhood cover, where each cluster consists of an inner kernel and shell, and the kernels themselves already cover all the $r$-neighborhoods.
Formally, we use the following result:

\begin{lemma}\label{thm:nbr-cover-with-sep}
    Let $r \geq 1$ be an integer.
    There is an $\tilde{O}(m)$ time algorithm that computes a collection of \emph{kernels} $K_1, 
    \dots, K_\ell \subseteq V$
    with corresponding \emph{clusters} $X_j := B_G (K_j, r)$ and \emph{shells} $Y_j := X_j - K_j$, such that:

    \begin{itemize}[itemsep=0pt]
        \item \emph{(Covering)} Each $v \in V$ has a \emph{covering kernel} $K_j$ with $B_G (v, r) \subseteq K_j$.

        \item \emph{(Diameter)} Each cluster has \emph{weak diameter}%
        \footnote{A vertex set $W \subseteq V$ is said to have weak diameter $d$ if $\dist_G (w,w') \leq d$ for every $w,w' \in W$.}
        $O(r \log n)$.

        \item \emph{(Overlap)} Each $v \in V$ belongs to $O(\log n)$ clusters.
    \end{itemize}
\end{lemma}
\Cref{thm:nbr-cover-with-sep} follows as a corollary from the classical near-linear time algorithm for neighborhood covers (without kernels) of~\cite{ABCP98}. We provide the proof in~\Cref{sec:neighborhood-cover}.

For level $i$, we apply~\Cref{thm:nbr-cover-with-sep} with $r := 4d_{i+1}$, where this choice of radius stems from item 3 of~\Cref{lem:level-i}.
From now on, $K_1, \dots, K_\ell$, $X_1, \dots, X_\ell$ and $Y_1, \dots ,Y_\ell$ denote the kernels, clusters and shells resulting from this application.

As explained in the overview, the shortcutted graphs will have weighted shortcut edges based on the neighborhood cover.
These shortcuts will correspond to distances from the following set:
\[
\big\{\dist_{G-K_j} (a,v) \mid 1\leq j \leq \ell, ~ a \in A_{i+1} \cap Y_j, ~ v \in V-K_j \big\}.
\]
Namely, for each $(i+1)$-landmark $a \in A_{i+1}$ and each $j$ such that $a$ belongs to the shell $Y_j$, we need to compute SSSP in $G-K_j$, i.e., avoiding the kernel $K_j$.
By the overlap guarantee in~\Cref{thm:nbr-cover-with-sep}, each $(i+1)$-landmark appears in $O(\log n)$ shells, and hence triggers only $O(\log n)$ SSSP calls.
Thus, the total time spent for computing these distances is $\tilde{O}(m \cdot |A_{i+1}|) \leq \tilde{O}(m \cdot |A_{\DELTA}|) = \tilde{O}(m \cdot n^{1 - \frac{1}{\DELTA+1}} \cdot |S|^{\frac{1}{\DELTA+1}})$, which is within budget.

\paragraph{Step 3: Finding Relevant Colors.}
We also use the neighborhood cover to define the notion of \emph{relevant colors}:

\begin{definition}[Relevant Color for $i$-Source]\label{def:relevant-color}
Color $c$ is \emph{relevant} for $i$-source $z \in A_i \cup S$ if there exists some $c$-colored vertex $x \in B_G (z, d_i)$ whose covering kernel $K_j$ (i.e., the kernel with $B_G (x, 4d_{i+1}) \subseteq K_j$) contains \emph{all} $c$-colored vertices in $B_G (z,d_i)$.
In this case, we choose one such arbitrary $x$, and denote the kernel, cluster and shell for the pair $z,c$ by $K(z,c) := K_j$, $X(z,c) := X_j$ and $Y(z,c) := Y_j$, respectively.
\end{definition}

The reasoning behind the name ``relevant'' stems from~\Cref{lem:level-i}(3):
it says that an $i$-source $z$ should only care about colors $c$ such that the $c$-colored vertices in $B_G (z, d_i)$ are within distance $4d_{i+1}$ from each other, and hence the covering kernel of any one of them would contain all of them.
To check if some color $c$ is relevant for some $i$-source $z$ only takes $\tilde{O}(1)$ time, since each vertex in $V_c \cap B_G (z,d_i)$ only belongs to $O(\log n)$ kernels, and $|V_c| \leq \DELTA = O(\log n)$.
Thus, computing all pairs of $i$-sources and relevant colors takes time $\sum_{z \in A_i \cup S} \tilde{O}(|B_G(z, d_i)|) \leq |A_\DELTA \cup S| \times \tilde{O}(m) = \tilde{O}(m \cdot n^{1 - \frac{1}{\DELTA+1}} \cdot |S|^{\frac{1}{\DELTA+1}})$, which is within budget.

\paragraph{Step 4: Constructing Shortcutted Graphs.}
Fix an $i$-source $z \in A_i \cup S$ and relevant color $c$.
We now give the construction of the corresponding shortcutted graph $\hat{G}(z,c)$.
We first introduce some notations and terminologies.
To avoid clutter, in the following we may omit $z,c$ from the notations when they are clear from context.
\begin{itemize}[itemsep=0pt]
    \item \emph{The ball} refers to $B := B_G (z,d_i)$, and \emph{the tree} refers to the shortest paths tree $T := T(z)$ from $z$ in $B$, which was computed in Step 1.

    \item A vertex $v \in B$ is called \emph{affected} if it is a descendant of some $c$-colored vertex in the tree $T$, and the set of affected vertices is denoted by $D := D(z,c)$. 

    \item The \emph{kernel}, \emph{cluster}, and \emph{shell} refer to $K := K(z,c)$, $X := X(z,c)$, and $Y := Y(z,c)$ from~\Cref{def:relevant-color}, respectively.
    Recall that the cluster is the $4d_{i+1}$-radius ball around the kernel, i.e., $X = B_G (K, 4d_{i+1})$, and the shell is the part of the cluster outside the kernel, i.e., $Y = X-K$.
    
    \item The \emph{affected kernel}, \emph{affected cluster} and \emph{affected shell} are the intersections of the kernel, cluster and shell with the affected set $D$, 
    denoted $K_D$, $X_D$ and $Y_D$
    respectively.

    \item The \emph{designated landmarks} $L := L(z,c)$ are all those $(i+1)$-landmarks found in the affected shell. That is, $L = A_{i+1} \cap Y_D$.

    \item The \emph{outer layer} $O := O(z,c)$ are all those vertices in $B$ that have some neighbor from the affected cluster, but are not affected themselves. That is, $O = N_G (X_D) \cap (B-D)$.
\end{itemize}

We define the shortcut graph for $z$ and $c$ as follows (see verbal description following the formal definition below).
\begin{mdframed}[style=MyFrame, nobreak=true]
\begin{center}
\textbf{The Shortcutted Graph $\hat{G}(z,c)$}
\end{center}
\medskip
\noindent For $z \in A_i \cup S$ and relevant color $c$, the shortcutted graph $\hat{G} = \hat{G}(z,c)$ is defined by
\begin{align*}
    V(\hat{G}) &:= B, \\
    E(\hat{G}) &:= \operatorname{Original-Edges}(\hat{G}) \cup \operatorname{Landmark-Shortcuts}(\hat{G})
    \cup \operatorname{Root-Shortcuts}(\hat{G}),
\end{align*}
where the edges and their weights are defined by
\begin{align*}
    \operatorname{Original-Edges}(\hat{G}) &:= \{ \text{$(u,v)$ of weight $1$} \mid (u,v) \in E, ~ u \in B, ~ v \in X_D\} \\
    \operatorname{Landmark-Shortcuts}(\hat{G}) &:= \{\text{$(a,v)$ of weight $\dist_{G-K} (a,v)$} \mid a \in L, ~ v \in B-K\} \\
    \operatorname{Root-Shortcuts}(\hat{G}) &:=
    \begin{cases}
    \text{if $L \neq \emptyset$:} & \{(z,v) \text{ of weight } \dist_G (z,v) \mid v \in O \text{ or } v \in B-(K \cup D) \} \\    
    \text{if $L = \emptyset$:} & \{(z,v) \text{ of weight } \dist_G (z,v) \mid v \in O\}
    \end{cases} \\
\end{align*}
\end{mdframed}

The vertex set of $\hat{G}$ is just the ball $B$.
We include in $\hat{G}$ every \emph{original edge} from $G$ that has some endpoint in the affected cluster $X_D$ (and the other endpoint can be anywhere in $B$).
These original edges are given weight $1$.
Then, we add two types of \emph{shortcut edges} to $\hat{G}$.
One type is the \emph{landmark shortcuts}:
From each designated landmark $a \in L$, we add edges to every vertex $v$ outside the kernel $K$, with weight $\dist_{G-K} (a,v)$.
The other type is the \emph{root shortcuts}:
First, we add edges from the root $z$ to every vertex $v$ in the outer layer $O$, with weight $\dist_G (z,v)$.
Additionally, in the ``non-degenerate'' case where $L \neq \emptyset$, i.e., there is \emph{some} designated landmark, we add such root shortcuts from $z$ to every $v \in B-(K \cup D)$, i.e., $v$ is a non-affected vertex outside the kernel.
(The reason behind this slight complication is rather technical. Essentially, if $L = \emptyset$, then the latter kind of root shortcuts are not required for us, but including them might damage the running time.)

We defer the analysis of the time it takes to construct the shortcut graphs, as it is more convenient to analyze together with the time to execute the next Step 5.

\paragraph{Step 5: SSSPs in Shortcutted Graphs.}
For every pair of $i$-source $z \in A_i \cup S$ and relevant color $c$, we consider the corresponding shortcutted graph $\hat{G} = \hat{G}(z,c)$.
We execute (weighted) SSSP from $z$ in $\hat{G}-c$.
Then, for each target vertex $t$ in the affected kernel $K_D$, we add the last edge of the $z \rsquig t$ shortest path computed by this SSSP into the output $H$.
Note that this edge must be original, since shortcuts do not touch the affected kernel.

\medskip
This concludes the description of the algorithm for level $i$.
Next, we provide an analysis of the output size and running time (for steps 4 and 5).
Afterward, we provide the correctness proof for the algorithm.

\subsection{Size and Running Time Analysis}
The analysis of the output size and the algorithm's running time hinges on the following:
\begin{lemma}\label{lem:affected-clusters}
    Let $z \in A_i \cup S$ and color $c$ relevant for $z$.
    Then, every vertex in the affected cluster $X_D (z,c)$ must have a $c$-colored vertex among its $O(d_{i+1} \log n)$  nearest ancestors in $T(z)$.
\end{lemma}
\begin{proof}
    Let $v \in X_D (z,c)$.
    Then, because $v$ is affected, it has some $c$-colored ancestor $x$ in $T(z)$.
    Since $c$ is a relevant color for $z$, the corresponding cluster $X(z,c)$ (in fact, even its kernel $K(z,c)$) contains every $c$-colored vertex in $T(z)$, and particularly $x$.
    As clusters have weak diameter $O(d_{i+1} \log n)$, the distance between $x$ and $v$ in $G$, and hence also in the shortest-path tree $T(z)$, is $O(d_{i+1} \log n)$.
\end{proof}

First, we analyze the number of edges added to the output $H$ during level $i$.
Each subroutine for pair $z,c$ adds as many edges as the size of the affected kernel.
So, by~\Cref{lem:affected-clusters}, the total number of added edges during level $i$ is
$|V| \cdot |A_i \cup S| \cdot O(d_{i+1} \log n) = \tilde{O} (n^2 \cdot \frac{d_{i+1}}{d_i}) = \tilde{O}(n^{2-\frac{1}{\DELTA+1}} |S|^{\frac{1}{\DELTA+1}})$, 
which is within our budget.

We now focus on the running time.
To start, let us consider the time required to \emph{construct} the shortcutted graphs.
The primary task here is to identify the vertices in the affected kernel/cluster/shell.
By~\Cref{lem:affected-clusters}, this can be done by scanning the subtrees rooted in the $c$-colored vertices in $T(z)$, but only up to depth $O(d_{i+1} \log n)$ from the root of the subtree, and checking if the scanned vertices belong to the corresponding kernel/cluster/shell.
Thus, fixing $z$ and letting $c$ vary, each edge in $T(z)$ is scanned only $O(d_{i+1} \log n)$ times.
Now summing over $z$, we bound the running time for identifying all affected clusters by $m \cdot |A_i \cup S| \cdot O(d_{i+1} \log n) = \tilde{O}(mn\frac{d_{i+1}}{d_i}) = \tilde{O}(mn^{1-\frac{1}{\DELTA+1}}|S|^{\frac{1}{\DELTA+1}})$, which is within budget.
Once the affected kernel/cluster/shell (and the designated landmarks inside the affected shell) are identified, the rest of the time to construct the shortcutted graph $\hat{G}(z,c)$ is linear in the number of edges (original or shortcuts) that $\hat{G}(z,c)$ has.
The SSSP we execute in this $\hat{G}(z,c)$ is also near-linear in the number of edges; hence, it remains to account for the time required to run the SSSPs in the shortcutted graphs.

Observe that the number of edges in $\hat{G}(z,c)$ is dominated by its original edges and its landmark shortcuts.
Indeed, a root shortcut of the form $(z,v)$ with $v \in O$ can be charged to some original edge $(v,u)$ with $u \in X_D$.
Also, a root shortcut of the form $(z,v)$ with $v \in B-(K \cup D)$ can be charged to a landmark shortcut $(a,v)$ with $a \in L$.
Now, recall that original edges and landmark shortcuts have at least one endpoint in the affected cluster.
Thus, by~\Cref{lem:affected-clusters}, each original edge $e \in E$ can appear in at most $|A_i \cup S| \cdot O(d_{i+1} \log n) = \tilde{O}(n \frac{d_{i+1}}{d_i}) = \tilde{O}( n^{1 - \frac{1}{\DELTA+1}} |S|^{\frac{1}{\DELTA+1}} )$ shortcutted graphs.
Similarly, each pair in $A_{i+1} \times V$ can appear as a landmark shortcut (with varying weight) in $\tilde{O}( n^{1 - \frac{1}{\DELTA+1}} |S|^{\frac{1}{\DELTA+1}} )$ shortcutted graphs.
So, summing over all pairs $z,c$, the total time for level $i$ (other than the preprocessing, which we already accounted for) is
\[
(m + |A_{i+1}| \cdot n) \cdot \tilde{O}(n^{1-\frac{1}{\DELTA+1}} |S|^{\frac{1}{\DELTA+1}})
\]
Note that
$|A_{i+1}| \cdot n \leq |A_{\DELTA}| \cdot n = \tilde{O}(n^{2-\frac{1}{\DELTA+1}} |S|^{\frac{1}{\DELTA+1}}) \leq \tilde{O}(m)$,
so the total running time of level $i$ is
$
\tilde{O}(m n^{1-\frac{1}{\DELTA+1}} |S|^{\frac{1}{\DELTA+1}})
$
as needed.

\subsection{Correctness}
We now prove the correctness of the algorithm in level $i$, meaning that it includes in the output $H$ all of the required last edges in this level.
Let $(s,t,c) \in S \times V \times C$ be a level-$i$ triplet; our goal is to show that $\LastE(s,t \mid c) \in H$.
To this end, let $z \in A_i \cup S$ be the $i$-source from~\Cref{lem:level-i} for the triplet $(s,t,c)$.
By~\Cref{lem:level-i}(1), we have $\LastE(s,t \mid c) = \LastE(z,t \mid c)$, and we will show that the latter has been added to $H$ during level $i$.

We first handle a trivial case, when $\pi(z,t)$ doesn't have the color $c$; then $\LastE(z,t 
\mid c)$ is just the last edge of $\pi(z,t)$.
Note that $\pi(z,t)$ has length at most $d_i$ by~\Cref{lem:level-i}(1), and hence it is found in the shortest-path tree $T = T(z)$ of the ball $B = B_G (z,d_i)$, which we've added to $H$.
Thus, we may assume $t$ has some $c$-colored ancestor $x$ in the tree $T$.
By~\Cref{lem:level-i}(3) we have $V_c \cap B \subseteq B_G (x, 4d_{i+1})$, hence the covering kernel of $x$ contains all of $V_c \cap B$.
Therefore, color $c$ is relevant for the $i$-source $z$.
We thus focus on the subroutine executed for the pair $z,c$ (and again, omit them from notations to avoid clutter).

Consider $K$, $X$, and $Y$, which are, respectively, the kernel, cluster, and shell we have chosen for $z$ and $c$.
By their definition, there is some $c$-colored vertex $x' \in B$ such that $B_G (x', 4d_{i+1}) \subseteq K$.
Now, by~\Cref{lem:level-i}(2), we have $\dist_G (x',t) \leq 2d_{i+1}$, implying that $t \in K$.
Recall that $t$ also has the $c$-colored vertex $x$ as an ancestor, meaning $t$ is an affected vertex in $T$, i.e., $t \in D$.
Thus, $t$ is in the affected kernel $K_D$, meaning that the subroutine for $z,c$ included in $H$ the last edge of the shortest $z \rsquig t$ path in $\hat{G}-c$, and our goal is to show that this is in fact $\LastE(z,t \mid c)$.
To this end, we first observe the following:
\begin{observation}
The following hold:
\begin{enumerate}[label=(\arabic*), itemsep=0pt]
    \item $\pi(z,t \mid c)$ is the shortest $z \rsquig t$ path in $G-(V_c \cap B)$ (and not only in $G-c$).
    
    \item Every path in $\hat{G}-c$ corresponds to some path in $G-(V_c \cap B)$.
\end{enumerate}
\end{observation}
\begin{proof}
    (1) By~\Cref{lem:level-i}(1), $\dist_{G-c}(z,t) \leq d_i$, so $\pi(z,t \mid c)$ cannot leave the $d_i$-ball $B$ around $z$.
    
    (2) This follows as shortcuts in $\hat{G}$ correspond to paths in $G - (V_c \cap B)$.
    Indeed, root shortcuts correspond to paths in $T$ to unaffected vertices, which avoid the color $c$ entirely.
    As for landmark shortcuts, they represent paths in $G-K$, and the kernel $K$ contains $V_c \cap B$ by its definition.
\end{proof}

Thus, the heart of the argument lies in the following lemma, completing the correctness proof:
\begin{lemma}
    There is a path in $\hat{G}-c$ that corresponds to $\pi(z,t \mid c)$.
\end{lemma}
\begin{proof}
    First, observe that once $\pi(z, t \mid c)$ enters the affected set $D$, it stays in $D$.
    This is because if $v \in B-D$ is some unaffected vertex, then the shortest $z \rsquig v$ in $G-c$ is simply the tree path in $T$, which only consists of unaffected vertices.
    We use this observation (implicitly) throughout.

    Partition $\pi(z, t \mid c)$ into subpaths $P_1 \circ P_2 \cdots \circ P_{2q}$ (last vertex of $P_j$ is first vertex of $P_{j+1}$), where the even subpaths $P_2, P_4, \cdots, P_{2q}$ are exactly the maximal subpaths within the affected core $K_D$.
    (This is well defined as $z \notin K_D$ and $t \in K_D$.)
    All original edges inside the affected cluster exist in $\hat{G}$, so the even subpaths are taken care of, and we just need to argue about the odd ones.

    Consider any odd subpath $P_j$ except the prefix $P_1$.
    If $P_j$ does not leave the affected cluster $X_D$, then $P_j$ exists in $\hat{G}$ as is.
    Otherwise, $P_j$ must take its first edge from some $u \in K_D$ into some $v$ in the affected shell $Y_D$, and walk through $Y_D$ for at least $4d_{i+1}$ steps to leave the cluster (since the shell is the $4d_{i+1}$-width ring around the kernel).
    So by the hitting set property~\ref{prop:hitting-set}, this traversal through the shell must hit some $(i+1)$-landmark $a \in A_{i+1} \cap Y_D = L$.
    From $a$, $P_j$ continues outside the kernel up until its last edge connecting some $v' \in Y_D$ to the final vertex $u' \in K_D$.
    Note that the edges $(u,v)$ and $(v',u')$ are within the affected cluster $X_D$, and hence included in $\hat{G}$.
    Also, $\hat{G}$ has the landmark shortcuts $(v,a)$ and $(a,v)$ with weights $\dist_{G-K}(v,a)$ and $\dist_{G-K}(a,v)$, which is precisely the distances that $P_j$ travels between $v,a$, and between $a,v'$.
    Thus, the path $(u,v) \circ (v,a) \circ (a,v') \circ (v',u)$ in $\hat{G}-c$ corresponds to $P_j$.

    It remains to consider the prefix $P_1$.
    Let us first handle the case where $P_1$ does not go through $D-X_D$, i.e., it only visits unaffected vertices or travels inside the affected cluster.
    Let $v$ be the last unaffected vertex on $P_1$.
    The next vertex after $v$ must be in $X_D$, meaning $v \in O$.
    Thus, $\hat{G}$ has the root shortcut $(z,v)$ with weight $\dist_G (z,v)$, which is exactly the distance $P_1$ travels between $z,v$.
    After reaching $v$, the path $P_1$ only uses edges touching $X_D$, which exist in $\hat{G}$.
    So, $P_1$ has a corresponding path in $\hat{G}-c$ as needed.

    Finally, suppose $P_1$ goes through $D-X_D$.
    Then, after visiting its last vertex from $D-X_D$, it must travel for at least $4d_{i+1}$ steps inside the shell $Y_D$, until it takes its final edge from some $v' \in Y_D$ to its final vertex $u' \in K_D$.
    So, by property~\ref{prop:hitting-set}, this traversal on the shell must hit some designated landmark $a \in L$ (and in particular $L \neq \emptyset$).
    Then, $\hat{G}$ has the landmark shortcut $(a,v')$ of weight $\dist_{G-K} (a,v')$, which is precisely the distance that $P_1$ travels between $a,v'$.
    Also, the edge $(v',u')$ is within $X_D$, and thus exists in $\hat{G}$.
    It remains to show that the subpath $P'_1$ of $P_1$ between $z,a$ has a corresponding path in $\hat{G}-c$.
    To this end, consider the last unaffected $u \in B-D$ vertex that $P'_1$ visits.
    \begin{itemize}[itemsep=0pt]
        \item If $u \notin K$, then $\hat{G}$ has the root shortcut $(z,u)$ of weight $\dist_G (z,u)$, which is the distance $P'_1$ travels between $z,u$.
        Also, $\hat{G}$ has the landmark shortcut $(u,a)$ with weight $\dist_{G-K} (u,a)$, which is the distance $P'_1$ travels between $u,a$.

        \item If $u \in K$, then consider the next vertex $v$ after $u$ in $P'_1$.
        Then $v$ is affected, so it cannot be in the kernel (as $P_1$ only reaches the affected kernel in its 
        last vertex $u$).
        Thus, $\hat{G}$ has the landmark shortcut $(v,a)$ of weight $\dist_{G-K}$, which is the distance of $P'_1$ between $v,a$.
        We thus remain with the subpath $P''_1$ between $z,v$.
        As $v$ is an affected neighbor of the kernel, it must be in the affected cluster $X_D$, and hence $u \in O$.
        Thus, the original edge $(u,v)$ is in $\hat{G}$, and $\hat{G}$ contains a root shortcut $(z,u)$ of weight $\dist_G (z,u)$: together, they correspond to $P''_1$.
    \end{itemize}
    So, in both cases $\hat{G}-c$ has a corresponding path to $P'_1$, and we are done.    
\end{proof}

This concludes the proof of~\Cref{thm:alg-upper-bound}.

\section{Lower Bounds for Multi-Source CFT Distance Preservers}

\subsection{Tree Construction}

Both of the lower bounds for CFT multi-source distance preservers (size and running time) hinge on a colored tree construction, which adapts the one given by Parter~\cite{Parter15} for the $\DELTA$-FT setting to hold also in the more relaxed CFT setting with color classes of size at most $\DELTA$.

\begin{lemma}\label{lem:T_Delta_q}
    Let $\DELTA \geq 0$ and $q \geq 2$ be integers. 
    There exists a rooted, partially colored%
    \footnote{``Partially colored'' means that vertices/edges with no color are allowed.
    Alternatively, uncolored vertices/edges can be thought of as having a unique color that does not appear anywhere else.
    }
    tree $T = T(\DELTA,q)$ with the following properties:
    \begin{enumerate}[label=(\alph*), itemsep=0pt]
        \item For every leaf $v$ of $T$, there exists a color $c_v$ such that $v$ is the \textbf{unique} leaf of minimal depth among those leaves that remain connected to the root in $T-c_v$.%
        \label{prop:distances}

        \item Each color appears on at most $\DELTA$ edges in $T$.%
        \label{prop:colors}
        
        \item $T$ has $q^\DELTA$ leaves.%
        \label{prop:leaves}

        \item Each leaf in $T$ has depth at least $d(\DELTA, q) := q^{\DELTA}-1$ and at most $D(\DELTA,q) := 2(q^{\DELTA}-1)$.
        \label{prop:depth}
        
        \item The number of edges in $T$ is $N(\DELTA,q) = \DELTA (q^{\DELTA+1} - q^{\DELTA-1})$.
        \label{prop:edges}

        \item The tree $T$ and the colors $c_v$ corresponding to each leaf $v$ can be computed in linear time (w.r.t.\ the size of $T$).
        \label{prop:linear-time}
    \end{enumerate}
\end{lemma}

\begin{proof}
We show the proof with edge colors: the vertex-colored version of the tree is obtained by assigning each vertex the color of the edge connecting it to its parent.
By induction on $\DELTA$.
For the base case, $T(0,q)$ is a single vertex $v$ with no edges.
We artificially define the color palette of $T(0,q)$ as $\{c_v\}$, with the color $c_v$ unused.
The required properties clearly hold.
For the induction step, we construct $T = T(\DELTA, q)$ for $\DELTA \geq 1$ as follows.
\begin{itemize}[itemsep=0pt]
    \item Create $q$ disjoint copies $T_1, \dots, T_q$ of $T(\DELTA-1, q)$, with disjoint color palettes (so that no color appears in two different copies).
    The leaves of $T$ will be those of $T_1, \dots, T_q$.
    For a leaf $v$ of $T_i$, we define $c_v$ as the same color guaranteed to exist by Property~\ref{prop:distances} of $T_i$.

    \item Create $q$ disjoint uncolored paths $Q_1, \dots, Q_q$, where
    $
    |Q_i| = 2(q-i)q^{\DELTA-1}
    $
    (so $Q_q$ is a single vertex).
    ``Hang'' each $T_i$ at the end of $Q_i$, i.e., the last vertex of $Q_i$ is the root $r_i$ of $T_i$.

    \item Let $u_1, \dots, u_q$ be the first vertices of  $Q_1, \dots, Q_q$ respectively.
    Connect each consecutive pair $u_i, u_{i+1}$ by a path $P_i$ with $|P_i| = q^{\DELTA-1}$, and color $P_i$ as follows:
    for each leaf $v$ of $T_i$, assign its corresponding color $c_v$ to some unique edge in $P_i$.
    (By Property~\ref{prop:leaves},
    $T_i$ has $q^{\DELTA-1}$ leaves, so $P_i$ has exactly enough edges to accommodate the colors.)

    \item Define the root of $T$ as $r := u_1$.
\end{itemize}
See Figure~\ref{fig:T(Delta,q)} for an illustration.
\begin{figure}
    \centering
    \resizebox{8cm}{!}{\input{T_Delta,q_.tikz}} 
    \caption{
        Construction of the tree $T(\DELTA=2, q=3)$.
        $T_1$, $T_2$ and $T_3$ are copies of $T(1,3)$ with disjoint colors.
        The corresponding color $c_v$ for each leaf $v$ is shown beneath it.
    }
    \label{fig:T(Delta,q)}
\end{figure}
To prove the stated properties, we first focus on the paths going from the root of $T$ to the roots of $T_1, \dots, T_q$, denoted by $R_1, \dots, R_q$ respectively.
By Property~\ref{prop:depth} of $T(\DELTA-1,q)$, it holds that
\begin{equation}\label{eq:leaf-depth}
    |R_i| + d(\DELTA-1, q) \leq \depth_T (v) \leq |R_i| + D(\DELTA-1,q) \quad\text{for any leaf $v$ in $T_i$.}
\end{equation}
We calculate $|R_i|$;
observe that
$
R_i = P_1 \circ \cdots \circ P_{i-1} \circ Q_i,
$
so
\begin{align}
    |R_i| 
    &= (i-1)q^{\DELTA-1} + 2(q-i)q^{\DELTA-1} = (2q-i-1)q^{\DELTA-1}
    \label{eq:Ri-length}
\end{align}
We are now ready to show the required properties.
The most interesting is Property~\ref{prop:distances}; the rest follow by straightforward induction.

\begin{itemize}[itemsep=0pt]
    \item[\ref{prop:distances}] 
    Consider a leaf $v$, originating from $T_i$.
    The only path of $P_1, \dots, P_{q-1}$ on which the color $c_v$ appears is $P_i$ (or, if $i=q$, none of them contain $c_v$).
    Hence, the path $R_j$ survives in $T-c_v$ iff $j \leq i$.
    Therefore, a leaf $v'$ of $T$ that remains connected to $r$ in $T-c_v$ is either
        (i) 
        found in $T_i$ and remains connected to its root in $T_i - c_v$, or
        (ii)
        found in some $T_j$ with $j < i$.\\
    By Property~\ref{prop:distances} of $T_i$, $v$ has minimal depth in $T_i$ among the type-(i) leaves, hence this also holds in $T$.
    Next, let $v'$ be a type-(ii) leaf. In this case,
    \begin{align*}
        &\depth_T (v') - \depth_T (v) \\
        &\geq \big(|R_j| - |R_i|\big) + \big(D(\DELTA-1,q) - d(\DELTA-1,q)\big) &&\text{by \Cref{eq:leaf-depth}}\\
        &= (i-j)q^{\DELTA-1} - (q^{\DELTA-1}-1) &&\text{by \Cref{eq:Ri-length}, def.\ of $d(\cdot,q), D(\cdot,q)$} \\
        &\geq 1 &&\text{as $i-j \geq 1$.}
    \end{align*}

    \item[\ref{prop:colors}]
    Each color $c$ in $T$ appears at most $\DELTA-1$ times in some $T_i$ (by induction), at most once in $P_i$, and nowhere else, so overall at most $\DELTA$ times.

    \item[\ref{prop:leaves}]
    By induction, each $T_i$ has $q^{\DELTA-1}$ leaves, hence $T$ has $q \cdot q^{\DELTA-1} = q^{\DELTA}$ leaves.

    \item[\ref{prop:depth}]
    The shortest (longest) of the $R_i$'s is $R_q$ ($R_1$) by~\Cref{eq:Ri-length}.
    So by~\Cref{eq:leaf-depth} for leaf $v$,
    \begin{align*}
        \depth_T (v) &\geq |R_q| + d(\DELTA-1,q) = (q-1)q^{\DELTA-1} + q^{\DELTA-1} - 1 = q^{\DELTA} - 1 = d(\DELTA, q) , \\
        \depth_T (v) &\leq |R_1| + D(\DELTA-1, q) = 2(q-1)q^{\DELTA-1} + 2(q^{\DELTA-1} - 1) =  2(q^{\DELTA}-1) = D(\DELTA, q).
    \end{align*}

    \item[\ref{prop:edges}]
    Each edge of $T$ either is in some $T_i$, $Q_i$ or $P_i$.
    Hence, the total number of edges is
    \begin{align*}
        q \cdot N(\DELTA-1,q)  + \sum_{i=1}^q |Q_i| + \sum_{i=1}^{q-1} |P_i|
        &= q\cdot (\DELTA-1) \cdot (q^{\DELTA} - q^{\DELTA-2}) + (q-1) \cdot q^{\DELTA} + (q-1) \cdot q^{\DELTA-1}  \\
        &= \DELTA \cdot (q^{\DELTA+1} - q^{\DELTA-1}) 
        = N(\DELTA,q).
    \end{align*}

    \item[\ref{prop:linear-time}]
    Immediate by the construction procedure of $T$ from $T_1, \dots, T_q$.

\end{itemize}
\end{proof}

\subsection{Size Lower Bound}

We now prove the tight lower bound on the worst-case size of CFT multi-source distance preservers.

\begin{theorem}\label{thm:size-lower-bound}
    For any integers $1 \leq \DELTA, \sigma \leq n$,
    there exists an $n$-vertex colored graph $G = (V,E)$, where each color appears at most $\DELTA$ times, and sources $S \subseteq V$ with $|S| = \sigma$, such that any CFT $S \times V$ distance preserver of $G$ has at least $\tilde{\Omega}(n^{2-\frac{1}{\DELTA+1}} \cdot \sigma^{\frac{1}{\DELTA+1}})$ edges.
\end{theorem}

\begin{proof}
We construct the lower bound instance as follows:
\begin{itemize}[itemsep=0pt]
    \item Take $\sigma$ disjoint copies $T_1, T_2, \dots, T_\sigma$ of the tree $T(\DELTA, q)$ from~\Cref{lem:T_Delta_q}, where we set $q := \big(\frac{n}{\sigma \DELTA} \big)^{\frac{1}{\DELTA+1}}$.
    For each copy, we use a different color palette, so that the trees are also \emph{color-disjoint}, with color classes of size at most $\DELTA$ by \Cref{lem:T_Delta_q}\ref{prop:colors}.

    \item Define $S$ as the $\sigma$ different roots $s_1, \dots, s_\sigma$ of $T_1, \dots, T_\sigma$ (respectively).

    \item Denote by $X$ the set of leaves of $T_1, \dots, T_\sigma$.

    \item Create a set $Y$ of $n$ new vertices, and add all edges $(x,y)\in X \times Y$ (with no color) to $G$.
\end{itemize}
The total number of vertices is
\[
|V(G)| = n + \sum_{i=1}^\sigma |V(T_i)| = n + \sigma \cdot O(\DELTA q^{\DELTA+1}) = n + \sigma \cdot O\big(\frac{n}{\sigma}\big) = \Theta(n),
\]
where the first equality is by \Cref{lem:T_Delta_q}\ref{prop:edges}, and the second by choice of $q$.
The number of edges in $X \times Y$ is 
\[
n \cdot \sum_{i=1}^\sigma \#\text{leaves in $T_i$} = n \cdot \sigma q^\DELTA = \Omega \big( n^{2-\frac{1}{\DELTA+1}} \cdot \sigma^{\frac{1}{\DELTA+1}} \cdot \tfrac{1}{\DELTA} \big),
\]
where the first equality is by \Cref{lem:T_Delta_q}\ref{prop:leaves}, and the second by choice of $q$.
So, it suffices to show that a CFT $S \times V$ distance preserver $H$ of $G$  must contain every edge $(x,y) \in X \times Y$.

Consider the copy $T_i$ of $T(\DELTA,q)$ in which $x$ is a leaf.
By \Cref{lem:T_Delta_q}\ref{prop:distances}, there is a color $c_x$ such that $x$ is the unique leaf of minimal depth in $T_i$ that remains connected to $s_i$ in $T_i - c_x$.
By construction of $G$, any $s_i \rsquig y$ path in $G-c_x$ must start with an $s_i$-to-leaf path in $T_i - c_x$, and take at least one more edge to arrive at $y$.
It follows that there is a \emph{unique} shortest $s_i \rsquig y$ path in $G-c_x$, which goes along the $s_i \rsquig x$ path in $T_i - c_x$, then takes the edge $(x,y)$.
As $H$ is a subgraph of $G$ and $\dist_{H-c} (s,y) = \dist_{G-c} (s,y)$, it must contain this path, and particularly the edge $(x,y)$.
\end{proof}

\subsection{Conditional Lower Bound for Combinatorial Algorithms}

Finally, we give the lower bound for combinatorial algorithms computing CFT multi-source distance preservers, conditioned on the following:

\begin{conjecture}[see e.g.~\cite{RodittyZ11,WilliamsW18,AbboudW14,ChechikC19}]\label{conj:BMM}
    Any combinatorial algorithm that gets as input two Boolean $n \times n$ matrices $A, B \in \{0,1\}^n$ with a total of $m$ $1$'s and outputs their (OR,AND)-product $AB$, requires $\Omega((mn)^{1-o(1)} + n^2)$ running time.
\end{conjecture}

\begin{theorem}\label{thm:alg-lower-bound}
    Fix $\DELTA \geq 1$ and $a,b \in [0,1]$ such that \ $a > \frac{\DELTA+b}{\DELTA+1}$.
    For any integer $n \geq 1$, denote $m = m(n) = n^{1+a+o(1)}$ and $\sigma = \sigma(n) = n^{b+o(1)}$.
    (So, $m$ is polynomially larger than $n^{2-\frac{1}{\DELTA+1}} \cdot \sigma^{\frac{1}{\DELTA+1}}$.)
    Suppose there is a combinatorial algorithm with the following guarantee:
    There is a constant $\epsilon > 0$ such that given a colored graph with color classes of size at most $\DELTA$ that has $n$ vertices, $m$ edges and $\sigma$ sources, the algorithm outputs a CFT multi-source distance preserver with $m^{1-\epsilon}$ edges within $n^{2 + a -\frac{1-b}{\DELTA+1} - \epsilon}$ time (i.e., polynomially faster than $m \cdot n^{1-\frac{1}{\DELTA+1}} \cdot \sigma^{\frac{1}{\DELTA+1}}$).
    Then~\Cref{conj:BMM} is false.
\end{theorem}

\begin{proof}
Consider a BMM instance, i.e., two $n \times n$ Boolean matrices $A,B$ with a total of at most $m = n^{1+a + o(1)}$ $1$'s.
We represent this instance as a graph $G$ whose vertices are partitioned to three sets $I,J,K$ of size $n$, where there is an edge between $i \in I$ and $j \in J$ iff $A[i,j] = 1$, and similarly, there is an edge between $j \in J$ and $k \in K$ iff $B[j,k] = 1$.
Thus, for $i \in I$ and $k \in K$, there is a $2$-path between $i,k$ iff the Boolean product of $A,B$ has a $1$ in entry $i,k$.
So, to solve the BMM instance, we need to determine which pairs in $I \times K$ have $2$-paths.
Recall that $\sigma = n^{b+o(1)}$.
We partition $I$ into $r := (n/\sigma)^{\frac{1}{\DELTA+1}}$ blocks of size $n/r = n^{\frac{\DELTA}{\DELTA+1}} \sigma^{\frac{1}{\DELTA+1}}$.
The algorithm works in a block-by-block manner; from now on, we focus on finding the pairs with 2-paths in $I' \times K$ for some block $I' \subset I$. 
We further partition $I'$ into $\sigma$ sub-blocks $I'_1, \dots, I'_\sigma$ of size $n/(r \sigma) = (n/\sigma)^{\frac{\DELTA}{\DELTA+1}}$.
For each sub-block $I'_\ell$, we create a copy $T_\ell$ of the tree $T(\DELTA, q := (n/\sigma)^{\frac{1}{\DELTA+1}})$ of~\Cref{lem:T_Delta_q},
with mutually disjoint color palettes for the different copies.
Each tree $T_\ell$ has $q^\DELTA = (n/\sigma)^{\frac{\DELTA}{\DELTA+1}}$ leaves; we connect it to $G$ by identifying them with the sub-block $I'_\ell$.
Let $G'$ be the resulting graph after connecting these $\sigma$ trees to the block $I'$.
We've only added $\leq \DELTA n = O(n)$ vertices (and edges) to get $G'$ from $G$.
Note that $G'$ is a partially-colored graph with color classes having size at most $\DELTA$: colors appear only on the trees, and the $G$ part remains uncolored, i.e., non-faulty.
Let $S = \{s_1, \dots, s_\sigma\}$ be the roots of $T_1, \dots, T_\sigma$.
We construct a CFT $S \times V(G')$ distance preserver $H' \subseteq G'$ using the assumed sparsification algorithm.

Next, we iterate over every $i \in I'$, with our goal to find every $k \in K$ such that $i,k$ have a $2$-path.
Let $I'_\ell$ be the sub-block containing $i$, and let $c_i$ be the color corresponding to $i$ as a leaf of $T_\ell$: namely, $i$ is the unique leaf of minimum distance from the root $s_\ell$ in $T_\ell - c_i$.
We run BFS from $s_\ell$ in $H'-c_i$, which gives us all distances from $s_\ell$ to every other vertex in $H'-c_i$, and thus also in $G'-c$ (as $H'$ is a preserver of $G'$).
Finally, for each $k \in K$, we check if the following holds:
\begin{equation}\label{eq:dist-comparison}
    \dist_{G'-c_i}(s_\ell,k) \overset{(?)}{\leq} \dist_{G'-c_i}(s_\ell, i) + 2.
\end{equation}
This achieves our goal, by the following claim:
\begin{claim}
    There is a $2$-path between $i,k$ in $G$ iff~\Cref{eq:dist-comparison} holds.
\end{claim}
\begin{proof}[Proof (of claim).]   
    If there is a $2$-path between $i,k$ in $G$, then it survives in $G'-c_i$ (since the colors in $G'$ only appear on added trees),
    so~\Cref{eq:dist-comparison} holds by triangle inequality.
    
    Conversely, suppose~\Cref{eq:dist-comparison} holds.
    By our construction of $G'$, the shortest path between $s_\ell$ and $k$ in $G'-c_i$ starts by walking from the root $s_\ell$ on the tree $T_\ell - c_i$ to some leaf $i' \in I'_j$.
    So,
    \begin{align}
        \dist_{G'-c_i}(s_\ell, k) 
        &= \dist_{T_\ell - c_i}(s_\ell, i') + \dist_{G'-c_i}(i', k) \nonumber \\
        & \geq \dist_{T_\ell -c_i} (s_\ell, i) + \dist_{G'-c_i}(i', k) && \text{as $i$ has min.\ depth in leaves of $T_\ell -c_i$,} \label{eq:x-has-min-depth}\\
        & \geq \dist_{T_\ell -c_i} (s_\ell, i) + 2 && \text{as there are no edges between $I$ and $K$,} \label{eq:X-Z-dist-is-2-or-more} \\
        & \geq \dist_{G'-c_i}(s_\ell, i) + 2  && \text{since $T_\ell -c_i$ is a subgraph of $G'-c_i$,} \nonumber \\
        & \geq \dist_{G'-c_i}(s_\ell, k) && \text{as~\Cref{eq:dist-comparison} holds.} \nonumber 
    \end{align}
    Hence, all the above inequalities in fact hold with equality.
    By the equality in~\Cref{eq:x-has-min-depth} we have $\dist_{T_\ell-c_i}(s_\ell, i') = \dist_{T_\ell-c_i} (s_\ell, i)$.
    But $i$ is the unique leaf of minimum depth in $T_\ell - c_i$, so we conclude that $i'=i$.
    Now, by the equality in~\Cref{eq:X-Z-dist-is-2-or-more} we have $\dist_{G'-c}(i',k) = 2$, meaning there is a $2$-path between $i'$ (which is $i$) and $k$, as we needed to show.
\end{proof}

We now analyze the algorithm's running time.
Denote the size and running time guarantees of the assumed sparsification algorithm by
\[
s = m^{1-\epsilon} = n^{1+a - \epsilon + o(1)}, \quad t = n^{2 + a -\frac{1-b}{\DELTA+1} - \epsilon} = n^{2 + a - \epsilon + o(1)} / r.
\]
Consider the time spent on block $I'$:
\begin{itemize}[itemsep=0pt]
    \item Constructing $G'$ from $G$ amounts to constructing the trees $T_1, \dots, T_\sigma$.
    Each such tree is constructed in linear time in its size, and the total size of the trees is $O(n)$.
    \item Applying the sparsification to get the preserver $H'$ takes up to $O(t)$ time.
    \item Then, for each $i \in I'$, we run one BFS procedure in (a subgraph of) $H'$ and check $n$ inequalities involving the output distances, which takes $O(s + n)$ time.
\end{itemize}
Thus, the total running time of our BMM algorithm is 
\[
\underbrace{r}_{\text{\#blocks}} \cdot
\left( 
    \underbrace{O(n)}_\text{construct $G'$} +    \underbrace{O(t)}_\text{construct $H'$} +  \underbrace{(n/r)}_{|I'|} \cdot \underbrace{O(s + n)}_\text{BFS in $H'$} 
\right) 
= O(rt + ns + n^2)
= O(n^{2+a-\epsilon + o(1)} + n^2).
\]
Since the sparsification algorithm for constructing preservers is combinatorial, 
our resulting BMM algorithm is also combinatorial, and its running time (which is faster than $\Omega((mn)^{1-o(1)} + n^2)$) contradicts~\Cref{conj:BMM}.
\end{proof}

\begin{remark}\label{remark:data-structure}
    Observe that a CFT $S \times V$ distance preserver $H$ of $G$ immediately provides a \emph{data structure} that, given a query $(s,c)$ of a source $s \in S$ and a failing color $c$, reports the single-source distances in $G-c$, i.e.\ $\{\dist_{G - c} (s,t) \mid t \in V\}$, in $O(|E(H)|)$ time.
    Indeed, the query is answered by computing BFS rooted at $s$ in $H-c$. 
    So, the size guarantee of the algorithm that computes $H$ translates to the \emph{query time} of the data structure, and its running time becomes the \emph{preprocessing time}.

    In the proof of~\Cref{thm:alg-lower-bound}, we actually used this implied data structure and not the preserver directly.
    Thus, the lower bound obtained there also hold for the data structure problem.
    Furthermore, as a $\DELTA$-FT $S \times V$ distance preserver is also a CFT preserver when color classes have size bounded by $\DELTA$, the lower bound also holds for the $\DELTA$-FT version of the data structure, that
    given a query $(s,F)$ with $s \in S$ and $F$ a set of at most $\DELTA$ failing vertices/edges, reports the single-source distances from $s$ in $G-F$.
\end{remark}

\phantomsection
\addcontentsline{toc}{section}{References}
\bibliographystyle{alphaurl}
{\small \bibliography{references}}

\appendix

\section{Other Types of CFT Preservers}\label{sec:other-CFT}

\subsection{Applications of Multi-Source: CFT Pairwise Preservers and $+2$ Spanners}\label{app:applications}

The first application is for \emph{pairwise} CFT distance preservers with arbitrary demand pairs:

\begin{theorem}
    For any $n$-vertex colored graph $G = (V,E)$ with color classes of size at most $\DELTA$, and any $P \subseteq V \times V$, there exists a CFT $P$-distance preserver $H$ with 
    $
    \tilde{O}(n^{2-\frac{2}{2\DELTA+3}} \cdot |P|^{\frac{1}{2\DELTA+3}})
    $
    edges.
\end{theorem}

\begin{proof}
    Let $L$ be a parameter to be set later;
    a path is \emph{long} if its length is $\geq L$, and \emph{short} otherwise.
    The strategy is to separately take care of triplets $(s,t,c)$ where $(s,t) \in P$ and $c$ is some color, depending on whether $\pi(s,t \mid c)$ is long (``long triplet'') or short (``short triplet'').
    \begin{itemize}
        \item \emph{Long triplets:}
        These are handled using CFT multi-source preservers.
        Let $S \subseteq V$ be a hitting set for all $\pi(s,t \mid c)$ with $(s,t,c)$ a long triplet.
        By standard hitting set arguments, $|S| = O(\frac{n}{L} \cdot \log n)$.
        We add to $H$ the CFT $S \times V$ distance preservers of~\Cref{thm:size-upper-bound}, which has $\tilde{O}(n^2 \cdot L^{-\frac{1}{\DELTA+1}})$ edges.
        Now, if $(s,t,c)$ is a long triplet, there is some $x \in S \cap \pi(s,t \mid c)$, so
        \begin{align*}
        \dist_{H-c} (s,t)
        &\leq \dist_{H-c} (s,x) + \dist_{H-c} (x,t) && \text{by triangle inequality,} \\
        &= \dist_{G-c} (s,x) + \dist_{G-c} (x,t) && \text{as $H$ is a CFT $(S \times V) \cup (V \times S)$ dist.\ pres.,} \\
        &= \dist_{G-c} (s,t) &&\text{since $x \in \pi(s,t \mid c)$}.
        \end{align*}

        \item \emph{Short triplets:}
        These are handled using ``hit-miss sampling''.
        We work in $r = 40 L \ln n$ iterations.
        In iteration $i$, we sample each color independently with probability $\frac{1}{L}$ into a color set $F_i$, and add to $H$ every \emph{short} path of the form $\pi(s,t \mid F_i)$ with $(s,t) \in P$.
        This adds $\leq |P| \cdot L$ edges in each iteration, hence $\tilde{O}(|P| \cdot L^2)$ edges over all iterations.
        
        Now, suppose $\pi(s,t \mid c)$ is short.
        We say iteration $i$ is \emph{good} for the triplet $(s,t,c)$ if $c \in F_i$, and no color from $\pi(s,t\mid c)$ is in $F_i$.
        In this case, iteration $i$ adds to $H$ a path of length $\dist_{G-c} (s,t)$ which avoids the color $c$, so $(s,t,c)$ is taken care of.
        The probability of a single iteration to be good is $\frac{1}{L}(1-\frac{1}{L})^L \geq \frac{1}{4L}$.
        So, the probability that $(s,t,c)$ has some good iteration is at least $1 - (1-\frac{1}{4L} )^r \geq 1 - e^{-\frac{r}{4L}} \geq 1 - n^{-10}$. By a union bound, with high probability every short triplet $(s,t,c)$ has some good iteration.
    \end{itemize}

   Choosing $L$ s.t.\
   $|P| \cdot L^2 = n^2 \cdot L^{-\frac{1}{\DELTA+1}}$, i.e., 
   $L = (\frac{n^2}{|P|})^{\frac{\DELTA+1}{2\DELTA+3}}$,
   yields the stated size bound for $H$.
\end{proof}

The second application is for CFT $+2$ \emph{spanners}: this is a subgraph $H$ of $G$ such that $\dist_{H-c} (u,v) \leq \dist_{G-c} (u,v) + 2$ for every $u,v \in V$ and color $c$.
The proof idea is based on~\cite{BodwinGPW17}.

\begin{theorem}
    For any $n$-vertex colored graph $G = (V,E)$ with color classes of size at most $\DELTA$, there exists a CFT $+2$ spanner $H$ with 
    $
    |E(H)| = \tilde{O} (n^{2-\frac{1}{\DELTA+2}})
    $
    edges.
\end{theorem}

\begin{proof}
    We show the proof for the case where $G$ has colored edges, and the case of colored vertices is very similar.
    Let $L$ be a parameter to be set later.
    We say a vertex is \emph{colorful} if it has at least $L$ edges of different colors incident to it, and \emph{dull} otherwise.
    
    Let $S \subseteq V$ be such that every colorful vertex has at least two incident edges of different colors with the other endpoint in $S$.
    By standard hitting set arguments, $|S| = O(\frac{n}{L} \cdot \log n)$. 
    Let $H_{dull} \subseteq G$ consists of all the edges incident to dull vertices.
    Let $H_{src}$ be a CFT $S \times V$ distance preserver of~\Cref{thm:size-upper-bound}.
    We define $H = H_{dull} \cup H_{src}$.

    We first prove correctness.
    Fix $u,v \in V$ and color $c$, and we show that $\dist_{H-c} (u,v) \leq \dist_{G-c} (u,v) + 2$.
    If $\pi(u,v \mid c)$ is contained in $H_{dull}$, we are done.
    Otherwise, there is some colorful $x \in \pi(u,v \mid c)$, with at least two incident edges of different colors connecting $x$ to $S$.
    At least one of those is not of color $c$.
    Denote it by $e = (x,s)$ with $s \in S$.
    Thus,
    \begin{align*}
        &\dist_{H-c} (u,v)\\
        &\leq \dist_{H-c} (u,s) + \dist_{H-c}  (s,v) &&\text{by triangle ineq.,} \\
        &= \dist_{G-c} (u,s) + \dist_{G-c} (s,v) &&\text{as $H$ is a  $1$-CFT $S \times V$ preserver,} \\
        &\leq \dist_{G-c} (u,x) + \dist_{G-c} (x,v) + 2 \dist_{G-c} (x,s) &&\text{by triangle ineq.\ (splitting both distances at $x$),}\\
        &= \dist_{G-c} (u,v) + 2 &&\text{since $x \in \pi(u,v \mid c)$, and $e=(x,s)$ is in $G-c$.}
    \end{align*}

    We now analyze the size.
    Note that a dull vertex sees at most $L$ different colors on its incident edges, so its degree can be at most $L \DELTA$.
    Thus, $H_{dull}$ contains at most $nL \DELTA$ edges.
    As for $H_{src}$, by~\Cref{thm:size-upper-bound} it has $O(n^2 \cdot L^{-\frac{1}{\DELTA+1}} \cdot \DELTA \log n)$ edges.
    We set $L$ so that $nL \DELTA = n^2 \cdot L^{-\frac{1}{\DELTA+1}} \cdot \DELTA \log n$, namely $L = (n \log n)^{1-\frac{1}{\DELTA+2}}$.
    Plugging this choice of $L$ yields the stated size bound for $H$.
\end{proof}

\subsection{Single-Pair Distance Preservers in Weighted Graphs}\label{sec:single-pair}

In this section, we study CFT \emph{single-pair} preservers, required to preserve the distance between a specific pair of vertices $s,t$ under any color fault.
Here, we focus on \emph{weighted} graphs with \emph{failing edges} (i.e., edge-colored in the CFT setting), and show a notable gap between $\DELTA$-FT and CFT with color classes of size $\leq \DELTA$, already for $\DELTA=2$.
The $\DELTA$-FT case was studied by~\cite{BodwinGPW17}, who showed that $1$-FT single pair distance preservers only need $O(n)$ edges, while $\DELTA$-FT for any $\DELTA \geq 2$ may require $\Omega(n^2)$ edges.

Our bounds for the CFT case are obtained by reductions to (fault-free) pairwise distance preservers.
Let $\DP(n,p)$ denote the worst-case size of (fault-free) distance preservers for at most $p$ pairs in $n$-vertex weighted undirected graphs.
Coppersmith and Elkin~\cite{CoppersmithE06} proved that $\Omega(n^{2/3}p^{2/3}) \leq \DP(n,p) \leq O(\min\{\sqrt{n}p+n,n\sqrt{p}\})$.
We show:

\begin{theorem}\label{thm:single-pair-informal}
    Let $\CFTSP(n,\DELTA)$ be the worst-case size of a CFT single-pair distance preserver, over $n$-vertex undirected weighted edge-colored graphs 
    s.t.\ each color appears at most $\DELTA$ times.
    Then,
    \[
    \Omega(\DP(n, n)) \leq \CFTSP(n,\DELTA) \leq O(\DP(n, \DELTA n)).
    \]
\end{theorem}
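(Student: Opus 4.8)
The plan is to establish the two inequalities separately, via reductions that exploit the structure of the shortest $s \rsquig t$ paths surviving a single color fault.

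\textbf{Upper bound} $\CFTSP(n,\Delta) \le O(\DP(n,\Delta n))$\textbf{.}
Fix a consistent and stable shortest-path tiebreaking in $G$, and let $H$ be the union, over all colors $c$, of the canonical shortest $s \rsquig t$ path $R_c$ in $G-c$ (together with the canonical shortest $s \rsquig t$ path of $G$ itself). By construction $H$ is a $1$-CFT single-pair preserver, so it suffices to bound $\abs{E(H)}$. The key structural claim is that each $R_c$ is a concatenation of at most $\Delta+1$ maximal subpaths, each a canonical shortest path of the fault-free graph $G$: intuitively, $R_c$ can deviate from the fault-free shortest-path structure only in the vicinity of the $\le \Delta$ edges of color $c$, so it has only $O(\Delta)$ ``breakpoints'', each incident to a color-$c$ edge. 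The first segment (rooted at $s$) and last segment (rooted at $t$) of every $R_c$ lie in the canonical shortest-path trees out of $s$ and into $t$, contributing only $O(n)$ edges overall. Each remaining interior segment is a canonical shortest path between two breakpoints, and we collect the corresponding endpoint pair into a set $P$. A charging argument then bounds $\abs{P}$: each interior detour bypasses an edge lying on the fault-free shortest-path structure (of which there are $O(n)$), and each such edge, belonging to a unique color class, is bypassed only by the $O(\Delta)$ coordinated detours of that class, so $\abs{P} = O(\Delta n)$. Consequently $H$ is a consistent overlay of canonical shortest paths for the $O(\Delta n)$ pairs of $P$, and since such overlays realize the worst-case distance-preserver size, $\abs{E(H)} = O(\DP(n,\Delta n))$.

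\textbf{Lower bound} $\CFTSP(n,\Delta) \ge \Omega(\DP(n,n))$\textbf{.}
Start from a worst-case instance for fault-free pairwise preservers: a weighted graph $G_0$ on $\Theta(n)$ vertices with $n$ demand pairs $\{(u_i,v_i)\}$ whose (unique) shortest paths are rigid, i.e., any subgraph preserving all the distances $\dist_{G_0}(u_i,v_i)$ has $\Omega(\DP(n,n))$ edges. Build $G$ by adding a \emph{spine} $s = z_0, z_1, \ldots, z_n = t$, where crossing the $i$-th gap directly uses a spine edge $g_i=(z_{i-1},z_i)$, and where $z_{i-1}$ is joined to $u_i$ and $v_i$ to $z_i$ so that the only cheap way to cross gap $i$ without $g_i$ is the detour $z_{i-1}\to u_i \rsquig v_i \to z_i$ through a shortest $u_i \rsquig v_i$ path. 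Weights are set so that, fault-free, the spine is the unique shortest $s \rsquig t$ path and no detour is used. The coloring respects color-class size $\le \Delta$: color $c_i$ contains $g_i$ together with one or more \emph{blocker} edges chosen so that removing $c_i$ not only opens gap $i$ but also destroys the cheaper ``cross-talk'' routes that would otherwise cross gap $i$ by meandering through $G_0$ and re-entering the spine elsewhere. Then in $G-c_i$ the unique shortest $s \rsquig t$ path is forced through a shortest $u_i \rsquig v_i$ path, so any $1$-CFT single-pair preserver must contain such a path for every $i$; restricting to $G_0$ yields a pairwise preserver of $\{(u_i,v_i)\}$, hence $\Omega(\DP(n,n))$ edges.

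\textbf{Main obstacle.}
The crux of the lower bound—and the conceptual reason colors help here while single edge faults do not—is the placement of the blocker edges. With $\Delta=1$ (the $1$-EFT case) each fault removes a single spine edge, and the very rigidity of $G_0$, which forces the pairs' paths to share vertices, simultaneously creates cheap cross-talk detours that let a replacement path avoid any particular $u_i \rsquig v_i$ path; this is exactly why $1$-EFT single-pair preservers are only $O(n)$. The extra edges of a color class of size $\Delta \ge 2$ must be positioned to cut precisely these cross-talk routes for each gap, re-forcing the rigid detours, and doing so consistently for all $n$ gaps—so that opening gap $i$ forces the $i$-th path while fault-free and other faults do not already require it—dictates a careful co-design of $G_0$, the spine, and the coloring. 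On the upper-bound side, the analogous delicate point is the charging that keeps the number of interior pairs at $O(\Delta n)$, together with a single consistent tiebreaking that simultaneously realizes all the post-fault replacement paths $R_c$.
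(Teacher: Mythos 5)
Your upper bound follows the paper's route, but its ``key structural claim'' is exactly the weighted restoration lemma of Afek, Bremler-Barr, Kaplan, Cohen and Merritt (\Cref{thm:wtd-restoration}), which the paper invokes as a cited theorem. Your intuitive justification of it is wrong in detail: a replacement path can stray arbitrarily far from the fault-free shortest-path structure, not only ``in the vicinity'' of the failed edges, and the $\leq \Delta$ interleaving edges in the decomposition are in general \emph{not} incident to color-$c$ edges. The lemma is true but nontrivial, so it must be cited or proved, not asserted. A second repair: ``consistent and stable tiebreaking'' does not suffice for the final step. You need every subgraph preserving $\dist(v_i,u_i)$ to actually contain the segment $\pi_G(v_i,u_i)$ used by the decomposition, which the paper gets by perturbing weights so that \emph{all} shortest paths are unique; the paper's remark after \Cref{thm:single-pair-upper-bound} explicitly warns that sophisticated tiebreaking schemes can conflict with the restoration lemma. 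With these fixes your construction (two shortest-path trees for the outer segments, $O(\Delta n)$ interleaving edges, and a pairwise preserver on the $O(\Delta n)$ interior pairs) coincides with the paper's.

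The lower bound has a genuine gap: the ``blocker edges'' are never constructed, and you concede in your ``main obstacle'' paragraph that their placement is the crux --- but neither natural instantiation can work. If the blockers are edges of $G_0$, then failing $c_i$ deletes edges of $G_0$ and may change $\dist_{G_0}(u_i,v_i)$, destroying the rigidity of the worst-case $\DP$ instance on which the whole reduction rests. If the blockers are newly added edges, failing them merely returns to the base graph and cannot destroy cross-talk routes that live entirely inside $G_0$ and never use them (deleting an edge only removes paths through it). The paper's construction (\Cref{thm:single-pair-lb}) sidesteps blocking altogether: instead of one spine it uses two graded ladders $u_1,\dots,u_{n+1}$ and $v_1,\dots,v_{n+1}$ with $s=u_1$, $t=v_1$, attaches $x_i$ to $u_i$ and $y_i$ to $v_i$ by uncolored edges of weight $n-i$, gives the two zero-weight ladder edges $(u_i,u_{i+1})$ and $(v_i,v_{i+1})$ the \emph{same} color $c_i$ (so color classes have size exactly $2$), and rescales $G^*$ so that all its internal distances are strictly below $1$. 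Failing $c_i$ cuts both ladders at level $i$ simultaneously, so any surviving $s \rsquig t$ path enters $G^*$ at some $x_j$ and exits at some $y_k$ with $j,k \leq i$, at cost $(n-j)+(n-k)+\dist_{G^*}(x_j,y_k)$; the integer weight gradient makes every cross-talk choice $(j,k)\neq(i,i)$ cost at least $2(n-i)+1$, strictly more than the intended route's $2(n-i)+\dist_{G^*}(x_i,y_i)<2(n-i)+1$. Cross-talk is thus made strictly \emph{more expensive} rather than impossible --- this weight-gradient-plus-paired-colors mechanism is precisely the idea missing from your proposal, and without it (or a substitute) your lower bound does not go through.
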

\noindent
When $\DELTA = \poly \log n$, we get $\CFTSP(n,\DELTA) = \widetilde{\Theta}(\DP(n,n))$; 
plugging the bounds of~\cite{CoppersmithE06}, this is between $\tilde{\Omega}(n^{4/3})$ and $\tilde{O}(n^{3/2})$.

We note that unlike other preservers in the paper, in this section we only consider failing edges rather than vertices; this is because our results use the well-known \emph{restoration lemma}~\cite{AfekBKCM02} which only holds for edge failures. 

\begin{lemma}[Weighted Restoration Lemma~\cite{AfekBKCM02}]\label{thm:wtd-restoration}
    After $r$ edge failures in an undirected weighted graph, each new shortest path is a concatenation interleaving at most $r + 1$ original shortest paths and $r$ edges.
\end{lemma}

We now show the upper bound in~\Cref{thm:single-pair-informal}.

\begin{proof}[Proof (of upper bound in~\Cref{thm:single-pair-informal})]
    As noted in~\cite{AfekBKCM02},
    by applying small perturbations to the edge weights, we may assume that all shortest paths in $G$ are unique. 
    Consider a color $c$ appearing on $\pi(s,t)$.
    By the weighted restoration lemma (\Cref{thm:wtd-restoration}),
    \begin{equation}\label{eq:restoration-decomp}
        \pi(s,t \mid c) = \pi(v_0, u_0) \circ e_1 \circ \pi(v_1, u_1) \circ e_2 \circ \cdots \circ e_\ell \circ \pi(v_\ell, u_\ell)
    \end{equation}
    where $\ell \leq \DELTA$, $v_0 = s$, $u_\ell = t$, and each edge $e_i$ connects $v_{i-1}$ and $u_i$.
    Denote the set of edges appearing in this decomposition as $E(c) = \{e_i : 1\leq i \leq \ell\}$ and, the set of intermediate vertex-pairs for which shortest path appear by $P(c) = \{(v_i, u_i) : 1 \leq i \leq \ell-1\}$.
    The subgraph $H$ is constructed as the union of:
    \begin{enumerate}[itemsep=0pt]
        \item Two shortest-paths-trees $T_s$ and $T_t$ in $G$, rooted at $s$ and $t$ (at most $2(n-1)$ edges),
        \item All edges found in some $E(c)$ for $c$ appearing on $\pi_G(s,t)$ (at most $n\DELTA$ edges), and
        \item A (non-FT) pairwise distance preserver $H'$ of $G$ for the pairs $\bigcup_c P(c)$ for every color $c$ on $\pi(s,t)$ (at most $\DP(n, \DELTA n)$ edges).
    \end{enumerate}
    Note that $\DP(n,\DELTA n) = \Omega(\DELTA n)$, as seen by considering an unweighted clique on $n$ vertices with any arbitrary $\DELTA n$ demand pairs.
    So overall, $|E(H)| \leq O(\DP(n,\DELTA n))$.
    
    Finally, we claim that $\pi(s,t \mid c) \subseteq H$ for every color $c$.
    If $c$ doesn't appear on $\pi(s,t)$, then $\pi(s,t \mid c) = \pi(s,t) \subseteq T_s \subseteq H$.
    If $c$ does appear on $\pi(s,t)$, then all of the components in the decomposition of $\pi(s,t \mid c)$ in \Cref{eq:restoration-decomp} are included in $H$: Indeed, the first and last paths are in $T_s \cup T_t$, the edges $e_1, \dots, e_\ell \in E(c)$, and the paths $\pi(v_i, u_i)$ for $1\leq i \leq \ell-1$ are in $H'$. 
\end{proof}

\begin{remark}
    The correctness argument in the above proof crucially uses that shortest paths in $G$ are \emph{unique}, which we obtained by adding small perturbations to edge weights.
    For this reason, even if the original graph $G$ was unweighted, 
    the function $\DP$ is defined in terms of weighted graphs.
    Better bounds are known for the unweighted analog of $\DP$ (Bodwin and Vassilevska Williams~\cite{BodwinW21});
    but these are obtained using sophisticated shortest-paths tiebreaking for unweighted shortest paths which might be ``in conflict'' with the restoration lemma, so we cannot apply them.
\end{remark}

We finish this section by showing the lower bound:

\begin{proof}[Proof (of lower bound in~\Cref{thm:single-pair-informal}]
    Let $G^*$ and $P^*$ be the worst-case instance for $\DP(n,n)$.
    Namely, $G^*$ is a weighted (uncolored) graph with $n$ vertices and $\DP(n,n)$ edges, $P^* = \{(x_i,y_i)\}_{i=1}^n$ is a set of $n$ vertex-pairs from $V(G^*)$, and the following property holds: For every $e \in E(G^*)$, there exists a pair $(x_i, y_i) \in P$ such that $\dist_{G^*}(x_i,y_i) < \dist_{G^* - e}(x_i,y_i)$, i.e., $e$ lies on every shortest $x_i \rsquig y_i$ path in $G^*$.
    We assume that the distance between any two vertices in $G^*$ is strictly smaller than $1$ (otherwise we can rescale the weights).

    We construct the lower bound instance $G$ with pair $s,t$ as follows.
    Start with $G^*$ (which is uncolored).
    Create  new vertices $u_1, \dots, u_{n+1}$ and $v_1, \dots, v_{n+1}$.
    For every $1 \leq i \leq n$, add the following edges:
    \begin{itemize}[itemsep=0pt]
        \item The two edges $(u_i, x_i)$ and $(y_i, v_i)$, both having weight $n-i$ and no color.
        \item The two edge $(u_i, u_{i+1})$ and $(v_i, v_{i+1})$, both having weight $0$ (or very small $\epsilon > 0$) and the same color $c_i$. 
    \end{itemize}
    Finally, let $s = u_1$ and $t = v_1$.
    See Figure~\ref{fig:single-pair-lb} for an illustration.
    \begin{figure}[]
        \centering
        \resizebox{12cm}{!}{\input{single-pair-lb.tikz}} 
        \caption{
            Illustration of the lower bound graph $G$ for $1$-CFT $(s,t)$-distance preserver from~\Cref{thm:single-pair-informal}.
            The graph $G^*$ with pairs $P^* = \{(x_i, y_i)\}_{i=1}^n$ form a worst-case instance for $\DP(n,n)$, reweighted so that all distances in $G^*$ are smaller than $1$.
        }
        \label{fig:single-pair-lb}
    \end{figure}

    Now, suppose $H$ is a CFT $\{(s,t)\}$-distance preserver of $G$.
    We show that each edge $e$ of $G^*$ must also be present $H$.
    Let $(x_i, y_i) \in P^*$ be a pair such that $\dist_{G^*}(x_i,y_i) < \dist_{G^* - e}(x_i,y_i)$.
    The construction of $G$ ensures that when the color $c_i$ fails, any simple $s \rsquig v$ path must have
    \begin{itemize}[itemsep=0pt]
    \item a prefix of the form $(s = u_1, u_2, \dots u_j, x_j)$ for $1\leq j \leq i$,
    \item a suffix of the form $(y_r, v_r, v_{r-1}, \dots, v_1 = t)$ for $1 \leq r \leq i$, and
    \item a connecting ``middle part'' which is an $x_j \rsquig y_r$ path in $G^*$.
    \end{itemize}
    If $j \neq i$ or $r \neq i$, then such a path must have weight at least $(n-j) + (n-r) \geq 2(n-i) + 1$.
    Conversely, if we take $j = r = i$ and use the shortest $x_i \rsquig y_i$ path in $G^*$, the path has weight $2(n-i) + \dist_{G^*}(x_i,y_i) < 2(n-i) + 1$.
    Thus, every shortest $s \rsquig t$ path in $G-c_i$ contains a shortest $x_i \rsquig y_i$ path in $G^*$, and therefore must contain $e$.
    As $H$ is a $1$-CFT $(s,t)$-distance preserver, it has to contain at least one shortest $s \rsquig t$ path from $G-c_i$, so $e$ must be present in $H$.
\end{proof}

\subsection{Single Source Reachability and Bounded Flow in Directed Graphs}

Single-source $\DELTA$-FT reachability preservers were introduced by Baswana, Choudhary and Roditty~\cite{BaswanaCR18}.
Here, the given graph $G = (V,E)$ is \emph{directed} and comes with a single source vertex $s$, and the goal is to find a sparse subgraph $H = (V,E')$ such that for every $t \in V$ and every $F \subseteq V$ ($F \subseteq E$) with $|F| \leq \DELTA$, there is a directed $s \rsquig t$ path in $G-F$ iff there is such a path in $H-F$.
The extension of this definition to the CFT setting is obvious (replacing the failure of the set $F$ with the failure of a color $c$ in the above definition).
The work of~\cite{BaswanaCR18} showed that any $n$-vertex \emph{directed} graph has a $\DELTA$-FT single-source reachability preserver with $O(2^\DELTA \cdot n)$ edges, and gave a matching lower bound.
We strengthen their lower bound and show it holds also in the CFT setting with color classes of size $\leq \DELTA$:

\begin{theorem}\label{thm:single-source-reachability-informal}
    The worst-case size of CFT single-source reachability preservers for $n$-vertex directed colored graphs with color classes of size $\leq \DELTA$ 
    (s.t. $2^{\DELTA} \leq n$) is
    $\widetilde{\Theta} \big (2^{\DELTA} \cdot n)$.
\end{theorem}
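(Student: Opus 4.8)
The plan is to prove matching upper and lower bounds. The upper bound is immediate from the brute-force relation in~\eqref{eq:CFT-EFT}: since every color class has at most $\Delta$ edges, any $f$-color fault removes at most $f\Delta$ edges, so any $(f\Delta)$-EFT single-source reachability preserver is automatically $f$-CFT. Plugging in the $O(2^{f\Delta}\cdot n)$ bound of Baswana, Choudhary and Roditty~\cite{BaswanaCR18} for $(f\Delta)$-EFT gives the upper bound $O(2^{f\Delta}\cdot n)$ with no $\polylog$ loss. All the work is in the lower bound $\widetilde{\Omega}(2^{f\Delta}\cdot n)$.

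First I recall the shape of the EFT lower bound: one builds an ``isolation gadget'' on $O(2^{f\Delta})$ vertices with a source $s$ and $2^{f\Delta}$ ``hubs'', such that for each hub $h$ there is a fault set $F_h$ of $\le f\Delta$ edges whose removal leaves $h$ as the unique hub reachable from $s$. Attaching $N=\Theta(n)$ additional ``target'' vertices, each with an incoming edge from every hub, then forces all $2^{f\Delta}\cdot N$ hub--target edges into any preserver, since after $F_h$ the only $s\rsquig t$ path ends with the edge $(h,t)$. This uses $2^{f\Delta}\le n$ to make the gadget negligible next to the targets.

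To adapt this to CFT I must realize each $F_h$ as a fault of only $f$ \emph{colors}, under a coloring partitioning the edges into classes of size $\le\Delta$. This is exactly where the standard EFT construction breaks: there the gadget is a depth-$(f\Delta)$ binary tree whose isolating set is the ``siblings along the root-to-hub path'', and these sets \emph{overlap} heavily (e.g.\ the top sibling edge lies in half of the $F_h$'s), so a short Menger-type argument rules out any coloring making all $F_h$ a union of $f$ classes. The fix is to replace each ``$\Delta$-bit decision'' by a \emph{deep disjoint-isolation selector}: a width-$O(1)$, depth-$\Theta(2^\Delta)$ gadget on $\widetilde{O}(2^\Delta)$ vertices with $2^\Delta$ outputs and $2^\Delta$ \emph{pairwise disjoint} isolating edge-sets $C_0,\dots,C_{2^\Delta-1}$, each of size exactly $\Delta$. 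Concretely, one routes the $2^\Delta$ outputs through $\Delta$ layers, layer $i$ consisting of two length-$2^{\Delta-1}$ bundles $\mathrm{Seg}(i,0),\mathrm{Seg}(i,1)$; output $u$ traverses $\mathrm{Seg}(i,u_i)$, the edges of $\mathrm{Seg}(i,b)$ are assigned bijectively to the outputs $v$ with $v_i=\bar b$, and $C_v$ picks, for each $i$, the edge of $\mathrm{Seg}(i,\bar v_i)$ assigned to $v$. Then $C_v$ avoids $v$'s own route yet severs every $u\ne v$ (which must cross a full opposite bundle), and the $C_v$ are disjoint by construction, so each may be declared a single color class.

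Finally I compose $f$ such selectors into a tree of selectors: a level-$1$ selector at $s$, and recursively a fresh selector hanging off each output, producing $2^{f\Delta}$ final hubs $h_{(v_1,\dots,v_f)}$. Each hub is isolated by failing one color class per level along its branch -- exactly $f$ colors -- and disjointness within each selector (and triviality across selectors) makes the global coloring a valid partition into classes of size $\le\Delta$. A geometric-series count yields $\widetilde{O}(2^{f\Delta})$ gadget vertices, so with $2^{f\Delta}\le n$ (absorbing the $\poly(\Delta)=\polylog n$ factor into $\widetilde{\Omega}$) I retain $N=\Theta(n)$ targets and obtain $\widetilde{\Omega}(2^{f\Delta}\cdot n)$ essential edges. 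I expect the main obstacle to be precisely the selector design: engineering \emph{disjoint} size-$\Delta$ isolating sets -- forced by the coloring constraint, in sharp contrast to the shallow overlapping tree that suffices for EFT -- while keeping each selector near-linear in $2^\Delta$, so that the $f$-fold composition still fits within the $\widetilde{O}(2^{f\Delta})$ vertex budget.
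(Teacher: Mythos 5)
Your overall architecture is the right one and matches the paper's: the upper bound via the brute-force relation~\eqref{eq:CFT-EFT} with~\cite{BaswanaCR18} is exactly what the paper does, and your key structural insight --- that the coloring constraint forces the $2^{f\Delta}$ isolating fault-sets to be realized as \emph{pairwise disjoint} size-$\Delta$ edge-sets, unlike the heavily overlapping sibling-sets of the EFT lower bound --- is precisely the point of the paper's construction (\Cref{lem:unique-leaf-tree}), where each edge carries at most one color. Your $f$-fold composition also corresponds to the paper's move (it builds one tree $T(f\Delta)$ and splits each size-$f\Delta$ class into $f$ size-$\Delta$ classes; failing the $f$ pieces of one class is your ``one color per level''). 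However, your concrete selector gadget is broken. You arrange the $\Delta$ layers \emph{serially}: layer $i$ is a single pair of bundles $\mathrm{Seg}(i,0),\mathrm{Seg}(i,1)$ shared by all $2^\Delta$ outputs, with output $u$ routed through $\mathrm{Seg}(i,u_i)$. In the union graph the two bundles of layer $i$ meet the two bundles of layer $i+1$ at width-$2$ junctions, so the graph ``forgets'' which bundles were used: after failing $C_v$, the surviving route reaches the junction at the end of $\mathrm{Seg}(\Delta,v_\Delta)$, and from that junction \emph{every} hub $h_u$ with $u_\Delta = v_\Delta$ is still reachable (the junction-to-hub edges are uncolored and not in $C_v$). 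So $C_v$ isolates a group of $2^{\Delta-1}$ hubs, not the single hub $h_v$; after failing $F_h$ an $s \rsquig t$ path can route through a different surviving hub, and the argument forcing all $2^{f\Delta} \cdot n$ hub--target edges into the preserver collapses. Information-theoretically, no width-$O(1)$ serial structure can work, since the set of hubs surviving any cut is determined by which of $O(1)$ final junctions remain reachable.

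The repair is to \emph{nest} the layers rather than chain them, which is exactly the paper's recursive tree: $T(\Delta)$ has a root joined by two paths of length $2^{\Delta-1}$ to two color-disjoint copies of $T(\Delta-1)$, and the color $c_v$ of each leaf $v$ of one copy is placed bijectively on an edge of the \emph{opposite} path. Branching after each bundle choice makes the hub's identity encoded by the actual path taken, so cutting one edge per level (the class $c_v$, of size $\Delta$, disjoint from all other classes by construction) disconnects every leaf except $v$ --- your intended isolation property --- while retaining your size budget of $\Delta 2^\Delta$ edges per selector, hence $O(f\Delta \cdot 2^{f\Delta}) \leq n$ gadget vertices and the $\widetilde{\Omega}(2^{f\Delta} \cdot n)$ bound of \Cref{thm:single-source-reachability-lower-bound}. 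The only thing sacrificed relative to your proposal is the width-$O(1)$ claim, which was never needed.
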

Hence, here there is essentially no gap between $\DELTA$-FT and CFT with color classes of size $\leq \DELTA$, i.e., the brute-force approach of using a solution for the $\DELTA$-FT to also solve CFT is optimal.

The problem has been further generalized to $\DELTA$-FT single-source \emph{$\lambda$-bounded flow preservers}, where $H-F$ is not only required to preserve the reachability from the source $s$ in $G-F$, but the \emph{flow value} from $s$ to every other vertex up to a threshold of $\lambda$~\cite{lokshtanov2019briefnotesinglesource,bansal_et_al:LIPIcs.ISAAC.2024.9} 
(the reachability case is just $\lambda=1$).
Bansal, Choudhary, Dhanoa and Wardhan~\cite{bansal_et_al:LIPIcs.ISAAC.2024.9} have recently generalized the results of~\cite{BaswanaCR18} and proved that the optimal (worst-case) sparsity in $\DELTA$-EFT is $\Theta(\lambda 2^{\DELTA} \cdot n)$.
Our CFT lower bound easily generalizes as well, giving another instance of where the CFT setting does not provide a benefit over the corresponding $\DELTA$-FT setting:
\begin{theorem}\label{thm:single-source-flow-informal}
    The worst-case size of CFT single-source $\lambda$-bounded flow preservers for $n$-vertex directed unweighted edge-colored graphs with color classes of size $\leq \DELTA$ 
    (s.t. $\lambda 2^\DELTA \leq n$) is
    $\widetilde{\Theta} \big (\lambda 2^\DELTA \cdot n)$.
\end{theorem}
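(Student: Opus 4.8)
The plan is to prove the matching upper and lower bounds separately. For the upper bound, I would first note that the inequality chain of~\eqref{eq:CFT-EFT} applies verbatim to $\lambda$-bounded flow preservers: a color class has at most $\Delta$ edges, so any set of $\leq f$ failing colors deletes $\leq f\Delta$ edges, and hence any $(f\Delta)$-EFT $\lambda$-flow preserver is automatically an $f$-CFT one (the requirement that every vertex retain flow value $\min(\lambda,\cdot)$ after the faulty edges are removed is monotone under enlarging the allowed fault set). Feeding the $O(\lambda 2^{f\Delta} n)$ bound of Bansal, Choudhary, Dhanoa and Wardhan~\cite{bansal_et_al:LIPIcs.ISAAC.2024.9} with fault parameter $f\Delta$ then yields the desired $\widetilde{O}(\lambda 2^{f\Delta} n)$; no new idea is needed for this direction.

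The main content is the lower bound. I would begin from the $f$-CFT single-source reachability lower-bound instance underlying \Cref{thm:single-source-reachability-informal}, which forces $\widetilde{\Omega}(2^{f\Delta} n)$ edges by choosing a coloring, with classes of size $\leq \Delta$, so that $f$ well-chosen color faults realize the $f\Delta$ ``effective'' edge faults of the Baswana, Choudhary and Roditty construction~\cite{BaswanaCR18}. Onto this I would graft the flow augmentation of~\cite{bansal_et_al:LIPIcs.ISAAC.2024.9}, which upgrades the ``surviving reachability'' requirement at each vertex to a ``surviving $\lambda$ edge-disjoint paths'' requirement: each vertex is equipped with $\lambda$ edge-disjoint in-structures, every one a colored copy of the $2^{f\Delta}$-redundant reachability gadget, arranged so that the max-flow from $s$ to each vertex equals $\lambda$ under every admissible color-fault pattern. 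The coloring is kept at class size $\leq \Delta$, and the hypothesis $\lambda 2^{f\Delta} \leq n$ is exactly what guarantees that the required $\lambda 2^{f\Delta}$ distinct in-neighbors per vertex can be realized in a simple graph. A counting argument should then show that any $f$-CFT $\lambda$-flow preserver must retain $\Omega(2^{f\Delta})$ edges from each of the $\lambda$ edge-disjoint in-structures, summing to $\widetilde{\Omega}(\lambda 2^{f\Delta} n)$.

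I expect the delicate step to be forcing the two factors to \emph{multiply} rather than merely add, which hinges on the interaction between the coloring and the flow gadget. Concretely, I would need to establish a min-cut certificate: for each of the worst-case choices of $\leq f$ colors, deleting those colors should simultaneously carve out the branching pattern that isolates one prescribed surviving path in \emph{every} one of the $\lambda$ flow-slots, while no single color class is allowed to grow beyond $\Delta$ edges. Proving that the surviving max-flow is \emph{exactly} $\lambda$ for every such fault pattern---so that preserving it genuinely forces the full $2^{f\Delta}$ redundancy across all $\lambda$ slots at once---is the part that the reachability argument (which tracks only a single surviving path per vertex) does not supply for free. This exact flow-value bookkeeping, carried out uniformly over all color-fault patterns and all $\lambda$ slots, is where I anticipate the real work to lie.
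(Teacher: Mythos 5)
Your upper bound is exactly the paper's: any $(f\Delta)$-EFT $\lambda$-bounded flow preserver is automatically $f$-CFT by monotonicity, and plugging $f\Delta$ into the $O(\lambda 2^{f}n)$ bound of~\cite{bansal_et_al:LIPIcs.ISAAC.2024.9} finishes that direction. Your lower-bound plan also matches the paper's construction in outline: $\lambda$ color-disjoint copies $T_1,\dots,T_\lambda$ of the split-color reachability tree $T(f\Delta)$ from \Cref{lem:unique-leaf-tree}, whose $\lambda 2^{f\Delta}$ leaves are all joined to a common set $Y$ of $n$ sinks (note: the $\lambda$ slots are shared globally across all sinks, not built per vertex --- per-vertex in-structures would blow the vertex count far past $\Theta(n)$). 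But you stop at precisely the step that carries the argument, and the route you sketch for it points the wrong way. You ask for fault patterns that ``simultaneously carve out'' one prescribed surviving path in \emph{every} one of the $\lambda$ slots. This is both unattainable within the budget --- a set of $\leq f$ colors with classes of size $\leq\Delta$ realizes only $f\Delta$ effective edge faults, exactly enough to run the isolation argument inside \emph{one} copy; controlling all $\lambda$ copies at once would require either $\lambda f$ colors or classes of size $\lambda\Delta$ --- and unnecessary.

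The paper's resolution is a simple bottleneck trick you are missing: $s$ is given exactly $\lambda$ outgoing edges, one into each root $r_i$. For a leaf $x$ in $T_i$, the fault set $F_x$ (the $f$ split parts of $c_x$) touches only $T_i$; the other $\lambda-1$ trees are intact and supply $\lambda-1$ edge-disjoint $s \rsquig y$ paths for free, so the surviving max flow to each $y\in Y$ equals exactly $\lambda$, with the outdegree of $s$ serving as the min-cut certificate whose ``exact bookkeeping'' you anticipated would be hard. Any family of $\lambda$ edge-disjoint $s \rsquig y$ paths in $H-F_x$ must then use all $\lambda$ out-edges of $s$, in particular $(s,r_i)$, and by \Cref{lem:unique-leaf-tree}\ref{prop:leave-to-color-map-in-T(Delta)} the unique $s \rsquig y$ path in $G-F_x$ through $r_i$ passes through $x$ and ends with the edge $(x,y)$, forcing $(x,y)\in H$. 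The factors $\lambda$ and $2^{f\Delta}$ multiply not because one fault governs all slots simultaneously, but because the forced edge $(x,y)$ ranges over all $\lambda 2^{f\Delta}$ leaves across the copies and all $n$ sinks, one fault set per leaf. Without the degree-$\lambda$ source bottleneck, nothing prevents a preserver from rerouting flow around the faulted tree, and your counting argument has no foothold.
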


From now on, we focus on the proof of~\Cref{thm:single-source-flow-informal} (as~\Cref{thm:single-source-reachability-informal} is a special case).
We will arbitrarily consider the case of edge failures (i.e., edge colored graphs), but the extension for vertex failures is trivial.
The upper bound of $O(\lambda 2^{\DELTA} \cdot n)$ follows trivially from the $\DELTA$-FT upper bound of~\cite{bansal_et_al:LIPIcs.ISAAC.2024.9}.
The lower bound is based on the following simple tree construction:

\begin{lemma}\label{lem:unique-leaf-tree}
    For every integer $\DELTA \geq 0$,
    there is a rooted edge-colored tree $T = T(\DELTA)$ such that:
    \begin{enumerate}[label=(\alph*), itemsep=0pt]
        \item For every leaf $v$ of $T$, there is a color $c_v$ such that all leaves but $v$ become disconnected from the root in $T-c_v$.%
        \label{prop:leave-to-color-map-in-T(Delta)}

        \item Each color appears $\leq \DELTA$ edges of $T$.%
        \label{prop:color-classes-in-T(Delta)}
        
        \item $T$ has $2^\DELTA$ leaves.%
        \label{prop:num-leaves-in-T(Delta)}
        
        \item $T$ has $\DELTA 2^\DELTA$ edges.%
        \label{prop:num-edges-in-T(Delta)}
    \end{enumerate}
\end{lemma}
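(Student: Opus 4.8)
The plan is to prove the lemma by induction on $\Delta$, closely mirroring but simplifying the construction of \Cref{lem:T_Delta_q}. The key simplification is that here we only care about \emph{connectivity} (a leaf is either connected to the root or not), not about depths, so no length-padding paths $Q_i$ are needed and it suffices to take just \emph{two} recursive copies rather than $q$ of them. For the base case $\Delta=0$, I would let $T(0)$ be a single vertex $v$, which is its own root and only leaf; it has $2^0=1$ leaf and $0=0\cdot 2^0$ edges, and assigning $v$ an otherwise-unused color $c_v$ makes property~\ref{prop:leave-to-color-map-in-T(Delta)} hold vacuously (there are no other leaves to disconnect).

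For the inductive step, the first thing I would establish is a small but crucial fact: the isolating colors $\{c_v\}$ guaranteed by property~\ref{prop:leave-to-color-map-in-T(Delta)} of $T(\Delta-1)$ are \emph{pairwise distinct} across its $2^{\Delta-1}$ leaves. Indeed, if $c_v = c_{v'}$ for distinct leaves $v \neq v'$, then deleting that single color would have to keep \emph{only} $v$ connected and simultaneously keep \emph{only} $v'$ connected, which is impossible. Hence $T(\Delta-1)$ has exactly $2^{\Delta-1}$ distinct isolating colors. I would then build $T(\Delta)$ as follows: take two copies $T_1, T_2$ of $T(\Delta-1)$ with \emph{disjoint} color palettes and roots $r_1, r_2$; introduce a new root $r$; and attach $r$ to $r_1$ by a path $P_1$ and to $r_2$ by a path $P_2$, each consisting of exactly $2^{\Delta-1}$ edges. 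The coloring is the heart of the construction: I color the edges of $P_1$ bijectively with the $2^{\Delta-1}$ isolating colors of the leaves of $T_2$, and symmetrically color the edges of $P_2$ with the isolating colors of the leaves of $T_1$. The leaves of $T(\Delta)$ are exactly those of $T_1$ and $T_2$, and for a leaf $v$ lying in $T_i$ I keep its original isolating color $c_v$ from $T_i$.

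Properties \ref{prop:color-classes-in-T(Delta)}--\ref{prop:num-edges-in-T(Delta)} would then follow by routine counting. A color in $T_1$'s palette appears at most $\Delta-1$ times inside $T_1$ (induction) plus at most once on $P_2$ (only if it is an isolating color of some $T_1$-leaf), for a total of at most $\Delta$; colors in $T_2$'s palette are symmetric, giving property~\ref{prop:color-classes-in-T(Delta)}. The leaf count is $2\cdot 2^{\Delta-1}=2^\Delta$, and the edge count is $2(\Delta-1)2^{\Delta-1} + 2\cdot 2^{\Delta-1} = \Delta 2^\Delta$, giving properties~\ref{prop:num-leaves-in-T(Delta)} and~\ref{prop:num-edges-in-T(Delta)}.

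The main obstacle, and the only genuinely substantive step, is verifying property~\ref{prop:leave-to-color-map-in-T(Delta)}. Fix a leaf $v$ of $T_1$ and delete the color $c_v$. Within $T_1$, induction tells us this disconnects every leaf of $T_1$ except $v$ from $r_1$; meanwhile $r_1$ stays joined to $r$ because $P_1$ carries only $T_2$-palette colors, which are disjoint from $c_v$. The crucial point is that $P_2$ was deliberately colored to carry \emph{every} isolating color of $T_1$'s leaves, and in particular $c_v$; hence deleting $c_v$ severs $P_2$ and disconnects \emph{all} of $T_2$ from $r$ in one stroke. Thus after deleting $c_v$ the vertex $v$ is the unique surviving leaf, exactly as required. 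The case $v\in T_2$ is symmetric, completing the induction.
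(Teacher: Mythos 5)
Your proposal is correct and follows essentially the same route as the paper's proof: the identical base case, the same recursive doubling with two color-disjoint copies of $T(\Delta-1)$ joined to a new root by paths of $2^{\Delta-1}$ edges colored by the isolating colors of the \emph{opposite} copy's leaves, and the same case analysis for property~(a) and the same counting for (b)--(d). Your extra observation that the isolating colors are pairwise distinct (which the paper leaves implicit, and which rests on the stronger, intended reading of property~(a) that $v$ itself stays connected in $T-c_v$ --- a reading the paper's own induction also uses) is a harmless refinement rather than a different approach.
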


\begin{proof}
    By induction on $\DELTA$.
    For the base case $\DELTA = 0$ we take single vertex $v$, which is both the root and the leaf, and an associated color $c_v$ (that does not appear on $T(0)$ as there are no edges).

    For the induction step, we construct $T = T(\DELTA)$ from $T(\DELTA-1)$ as follows.
    Let $T_0$ and $T_1$ be two disjoint copies of $T(\DELTA-1)$, which are also disjoint in their associated color palettes.
    Let $r_0, r_1$ be the roots of $T_0, T_1$, respectively.
    We create a root $r$ for $T(\DELTA)$, and connect it by a two paths $P_0, P_1$ to $r_0, r_1$ respectively, such that $P_0, P_1$ are vertex-disjoint paths (except in $r$), each having $2^{\DELTA-1}$ edges. 
    For every $i \in \{0,1\}$ and every leaf $v$ in $T_i$, we assign the color $c_v$ to one of the edges in $P_{1-i}$.
    Note that, by the induction hypothesis, there are $2^{\DELTA-1}$ leaves in $T_i$, which is the number of edges in $P_{1-i}$, so the latter can indeed accommodate the colors.
    See Figure~\ref{fig:T(Delta)} for an illustration.
    \begin{figure}[H]
        \centering
        \resizebox{8cm}{!}{\input{binary-tree.tikz}} 
        \caption{
            Construction of the tree $T(\DELTA=3)$.
            $T_0$ and $T_1$ are copies of $T(2)$ with disjoint colors.
            The corresponding color $c_v$ for each leaf $v$ is shown beneath it.
        }
        \label{fig:T(Delta)}
    \end{figure}

    We show Property~\ref{prop:leave-to-color-map-in-T(Delta)}.
    Let $v$ be a leaf of $T$, found in some $T_i$.
    All leaves of $T$ that are in $T_{1-i}$ become disconnected from $r$ in $T-c_v$, because they are all descendants of the edge with color $c_v$ that appears in $P_{1-i}$.
    Every leaf $v' \neq v$ that is also in $T_i$ is disconnected from $r_i$ in $T_i - c_v$ (by the induction hypothesis), hence $v'$ is also disconnected from $r$ in $T - c_v$.
    The leaf $v$ itself remains connected to $r_i$ in $T_i - c_v$ (by induction hypothesis).
    Since $P_i$ does not contain the color $c_v$ (it only contains colors from the other tree $T_{1-i}$), we see that $v$ remains connected to $r$ in $T-c_v$.

    The remaining properties \ref{prop:color-classes-in-T(Delta)}, \ref{prop:num-leaves-in-T(Delta)} and \ref{prop:num-edges-in-T(Delta)} follow immediately from the construction and induction hypothesis. 
\end{proof}

\begin{proof}%
    We first create $\lambda$ copies of $T(\DELTA)$, denoted $T_1, \dots, T_\lambda$, with disjoint color palettes, 
    where the edges of each copy $T_i$ are directed away from its root $r_i$.
    Next, we create a vertex $s$ and add (uncolored) outgoing edges from $s$ into every root $r_i$ of a tree $T_i$.
    The vertex $s$ will be our designated source in our lower bound instance $G$.
    Finally, denote by $X$ the set of leaves in $T_1, \dots, T_\lambda$.
    We create another $n$ vertices, denoted $Y$, and add all (uncolored) directed edges $(x,y) \in X \times Y$ (from $x$ to $y$) to obtain $G$.

    The number of vertices in each $T_i$ is $O(\DELTA \cdot 2^{\DELTA}) = O(n/ \lambda)$ (by \Cref{lem:unique-leaf-tree}\ref{prop:num-edges-in-T(Delta)} and our assumption that $\lambda 2^{\DELTA} \leq n$), hence $G$ has $\lambda \cdot O(n/\lambda) + n = \Theta(n)$ vertices.
    The number of leaves in each $T_i$ is $2^{\DELTA}$ (by \Cref{lem:unique-leaf-tree}\ref{prop:num-leaves-in-T(Delta)}), therefore $|X \times Y| = \lambda 2^{\DELTA} n$.
    So, to conclude the proof, it suffices to show that any CFT $\{s\}\times V$ $\lambda$-bounded flow preserver $H$ must contain every edge $(x,y) \in X\times Y$.

    Let $c_x$ be the color corresponding to the leaf $x$ in the copy $T_i$ which contains $x$, from~\Cref{lem:unique-leaf-tree}\ref{prop:leave-to-color-map-in-T(Delta)}.
    Then $x$ is the only leaf that remains reachable from $r_i$ in $T_i - c_x$.
    Hence, by our construction of $G$, there is a unique $s \rsquig y$ path $P_i$ in $G - c_x$ that starts with the edge $(s,r_i)$ -- this $P_i$ must go through $x$ and use the edge $(x,y)$.
    For each $j \neq i$, we can choose some $s \rsquig y$ path $P_j$ in $G-F_x$ that starts with the edge $(s,r_j)$ and goes down the (non-faulty) $T_j$ to some leaf, and takes one extra edge into $y$.
    Note that $P_1, \dots, P_\lambda$ are edge-disjoint.
    Hence, as $H$ is a CFT single-source $\lambda$-bounded flow preserver, $H-c_x$ must contain $\lambda$ edge-disjoint $s \rsquig y$ paths. 
    As $s$ has exactly $\lambda$ outgoing edges, going into $r_1, \dots, r_\lambda$, one of these paths must start with the edge $(s,r_i)$, so it must be $P_i$.
    Namely, $H$ contains the path $P_i$, and particularly the edge $(x,y)$.
\end{proof}

\section{Multi-Source CFT Distance Preservers in Directed Graphs}\label{sec:multi-source-directed}

In this section we consider multi-source CFT distance preservers for directed graphs: we show that the size bound of~\Cref{thm:size-upper-bound} also holds when $G$ is directed.
We assume that $\DELTA = O(\log n)$ as the size bound is only meaningful in this regime.
We arbitrarily give the proof when $G$ has colored edges, but the proof with colored vertices is virtually identical.

We use the same distance thresholds $d_i$ and hitting sets $A_i$ with~\Cref{prop:hitting-set} from~\Cref{sec:framework}.
We construct the preserver $H$ as follows:
For every $t \in V$ and $0 \leq i \leq \DELTA$, 
\begin{enumerate}[label=\textbf{(R\arabic*)}, itemsep=0pt]
    \item For each $a_i \in A_i$, add $\LastE(a_i, t)$ to $H$.
    \label{rule1}

    \item For each $a_i \in A_i$ and each color $c$ on the $d_{i+1}$-suffix of $\pi(a_i, t)$, add $\LastE(a_i,t \mid c)$ to $H$.
    \label{rule2}
\end{enumerate}

\paragraph{Size Analysis.}
Fix $t$ and $i$.
Rule~\ref{rule1} adds at most $|A_i|$ edges, and Rule~\ref{rule2} adds at most $|A_i| \cdot d_{i+1}$ edges, so overall $O(|A_i| \cdot d_{i+1})$ edges are added for $t,i$.
We note that
\[
|A_i| \cdot d_{i+1}
=
\begin{cases}
O \big( n \cdot \frac{d_{i+1}}{d_i} \cdot \log n \big) & \text{if $i>0$} \\
|S| \cdot d_1 & \text{if $i = 0$}
\end{cases}
~ =
\tilde{O} ( n^{1-\frac{1}{\DELTA+1}} \cdot |S|^{\frac{1}{\DELTA+1}} ).
\]
Summing over all $t \in V$ and $0\leq i \leq \DELTA$ yields the desired bound $|E(H)| = \tilde{O}(n^{2-\frac{1}{\DELTA+1}} \cdot |S|^{\frac{1}{\DELTA+1}})$.

\paragraph{Correctness.}
As in the undirected case, the correctness relies on the last edge observation (\Cref{obs:last-edge}, whose proof also holds when $G$ is directed).
We fix $s\in S$, $t\in V$ and color $c$, and our goal is just to prove that $\LastE(s,t \mid c) \in H$.
To this end, we show an inductive construction of ``(vertex, edge-set)'' pairs
$
(a_0, E_0),
(a_1, E_1),
\dots,
(a_i, E_i),
\dots
$
with the following invariants:
\begin{itemize}
\item[]
\begin{enumerate}[label=\textbf{(I\arabic*)}]
    \item The vertex $a_i \in A_i$ appears on $\pi(s,t \mid c)$ and $\dist_{G-c} (a_i ,t) \leq d_i$.
    \label{inv:a_i}
    
    \item The edge-set $E_i$ contains $i$ edges of color $c$, such that $\dist_G (e, t) > d_i$ for all $e \in E_i$.
    \label{inv:E_i}
\end{enumerate}
\end{itemize}
Intuitively, these invariants imply that at the $i$-th inductive step of the construction, the ``effective'' number of failing edges caused by the failure of the color $c$ is only $\DELTA-i$ or less, as there are $i$ edges of color $c$ that are ``far away'' from $t$, and we can replace $s$ with a ``close by'' vertex $a_i \in A_i$.

We initialize $a_0 = s$ and $E_0 = \emptyset$, so the invariants trivially hold (as $A_0 = S$, $d_0 = n$).
For the inductive step, assuming we already found $(a_i, E_i)$ for some $0 \leq i \leq \DELTA-1$, we either show how to find $(a_{i+1}, E_{i+1})$, or directly prove that $\LastE(s,t \mid c) \in H$.
We first present the cases where the latter option occurs.
By Invariant~\ref{inv:a_i}, we have $\LastE(s,t \mid c)= \LastE(a_i,t \mid c)$, so it suffices for us to show that the latter is in $H$.
\begin{enumerate}[label=(\roman*)]
    \item If color $c$ does not appear on $\pi(a_i, t)$, then 
    $
    \LastE(a_i,t \mid c) = 
    \LastE(a_i,t) \in H
    $
    by Rule~\ref{rule1}.
    \label{case1}

    \item If color $c$ appears on the $d_{i+1}$-suffix of $\pi(a_i, t)$, then $\LastE(a_i,t \mid c) \in H$ by Rule~\ref{rule2}.
    \label{case2}
\end{enumerate}
Suppose now that none of the above cases occur; we show how to find $(a_{i+1}, E_{i+1})$. See illustration in~\Cref{fig:color-sourcewise}.
\begin{figure}
    \centering
    \resizebox{\linewidth}{!}{\input{color-sourcewise-new.tikz}} 
    \caption{
        Illustration of the third inductive step in the correctness proof of~\Cref{thm:size-upper-bound} for directed graphs.
        The red edges correspond to (faulty) edges of color $c$.
        Distance thresholds from $t$ on the illustrated paths are shown as dashed lines.
        \emph{Left:} Case~\ref{case1}, where $\pi(a_2, t)$ is a suffix of $\pi(s,t\mid c)$.
        \emph{Middle:} Case~\ref{case2}, where an edge of color $c$ appears on the $d_3$-length suffix of $\pi(a_2, t)$.
        \emph{Right:} Neither of the cases \ref{case1} and \ref{case2} occurs.
        Then, we find a new $c$-colored edge on $\pi(a_2, t)$ of distance $> d_3$ from $t$, add it to $E_2$ to create $E_3$, and deduce that $\pi(s,t \mid c)$ must contain some $a_3 \in A_3$ on its $d_3$-length suffix.
    }
    \label{fig:color-sourcewise}
\end{figure}
As case~\ref{case1} did not occur, there is some edge $e$ of color $c$ on $\pi(a_i, t)$.
Then
\begin{equation}\label{eq:distances}
    d_{i+1} < \dist_G (e, t) \leq \dist_G (a_i, t) \leq \dist_{G-c} (a_i, t) \leq d_i
\end{equation}
where the first inequality holds as case~\ref{case2} did not occur, 
the second as $e$ belongs to $\pi(a_{i+1}, t)$, 
and the last by Invariant~\ref{inv:a_i}.
Now, by Invariant~\ref{inv:E_i}, $E_i$ only contains edges whose distance to $t$ in $G$ larger than $d_i$, so~\Cref{eq:distances} implies that $e \notin E_i$.
We therefore define $E_{i+1} = E_i \cup \{e\}$, so $E_{i+1}$ contains $i+1$ $c$-colored edges of distance $> d_{i+1}$ to $t$ in $G$, as required to satisfy Invariant~\ref{inv:E_i} for $i+1$.
To find $a_{i+1}$, we note that
\begin{equation}\label{eq:more-distances}
    \dist_{G-c} (s,t) \geq \dist_{G-c} (a_i, t) > d_{i+1}
\end{equation}
where the first inequality is by Invariant~\ref{inv:a_i}, and the second is by \Cref{eq:distances}.
Hence, by Property~\ref{prop:hitting-set}, there exist some $a_{i+1} \in A_{i+1}$ lying in the $d_{i+1}$-length suffix of $\pi(s,t \mid c)$, as required to satisfy Invariant~\ref{inv:a_i} for $i+1$.

We now finalize the correctness argument.
We have shown that the process can halt at some step $i < \DELTA$ only in case $\LastE(s,t \mid c) \in H$, so it remains to consider the case where $a_\DELTA$ and $E_{\DELTA}$ were constructed.
By Invariant~\ref{inv:E_i}, $E_\DELTA$ must contain \emph{all} of the $c$-colored edges in $G$ (as $|E_\DELTA| = \DELTA$, and color classes all have size $\leq \DELTA$), meaning they all have distance $>d_\DELTA$ from $t$ in $G$.
Also, by Invariant~\ref{inv:a_i}, $a_\DELTA \in A_{\DELTA}$ lies on $\pi(s,t \mid c)$ and $\dist_G (a_\DELTA, t) \leq \dist_{G-c} (a_\DELTA, t) \leq d_\DELTA$.
Thus, $\pi(a_{\DELTA},t)$ cannot contain any $c$-colored edges, so
$
\LastE(s,t \mid c) = \LastE(a_{\DELTA}, t \mid c) = \LastE(a_{\DELTA}, t)
$
and the latter edge belongs to $H$ by Rule~\ref{rule1}.

\section{Neighborhood Cover}\label{sec:neighborhood-cover}

Here, we give the proof of~\Cref{thm:nbr-cover-with-sep} from the classical $\tilde{O}(m)$ time algorithm for neighborhood covers by Awerbuch, Berger, Cowen and Peleg~\cite{ABCP98}.
First, by using their algorithm to construct a ``standard'' $2r$-neighborhood cover, we can compute a collection of clusters $Z_{ij} \subseteq V$ for $1 \leq i \leq p$, $1 \leq j \leq q$, and a mapping $\cover : V \to \{Z_{ij}\}_{1\leq i\leq p, 1 \leq j \leq q}$, such that:
\begin{itemize}[itemsep=0pt]
    \item (Covering) For each $v \in V$, $B_G (v, 2r) \subseteq \cover (v)$.
    \item  (Diameter) Each $Z_{ij}$ has (strong) diameter $O(r \log n)$.
    \item (Overlap) $p = O(\log n)$ and, for every $i=1,\dots p$, the sets $\{Z_{ij}\}_{1\leq j\leq q}$ are mutually disjoint.
\end{itemize}

We construct our ``shelled'' neighborhood cover for balls of radius $r$ as follows:
For each $Z_{ij}$ we create the kernel $K_{ij} = B_G (\cover^{-1}(Z_{ij}), r)$ and cluster $X_{ij} = B_G (K_{ij}, r) = B_G (\cover^{-1}(Z_{ij}), 2r)$ (and the shell $Y_{ij} = X_{ij} - K_{ij}$).
Let us now prove the required properties:
\begin{itemize}[itemsep=0pt]
    \item (Covering) For $v \in V$, if $\cover(v) = Z_{ij}$, then $B_G (v,r) \subseteq K_{ij}$ 
    by definition of $K_{ij}$.

    \item (Diameter) If $u,v \in X_{ij}$, then by definition of $X_{ij}$ there are $u',v' \in \cover^{-1}(Z_{ij}) \subseteq Z_{ij}$ such that $\dist_G (u,u'), \dist_G (v,v') \leq 2r$.
    By triangle inequality, $\dist_G (u,v) \leq 4r + \dist_G (u',v')$.
    But $u',v' \in Z_{ij}$ which has diameter $O(r \log n)$, so we obtain $\dist_G (u,v) = O(r \log n)$.
    Thus, $X_{ij}$ has weak diameter $O(r \log n)$.
    
    \item (Overlap) We assert that for each $1 \leq i \leq p$, the clusters $\{X_{ij}\}_{1 \leq j \leq q}$ are mutually disjoint, so in particular each vertex can belong to at most $p = O(\log n)$ clusters.
    Seeking contradiction, suppose $v \in X_{ij} \cap X_{i\ell}$ for $j \neq \ell$.
    Then there exist $u_j \in \cover^{-1} (Z_{ij})$ and $u_\ell \in \cover^{-1}(Z_{i\ell})$ such that $v \in B_G (u_j, 2r) \cap B_G (u_\ell, 2r) \subseteq Z_{ij} \cap Z_{i\ell}$, which is impossible as $Z_{ij}$ and $Z_{i\ell}$ are disjoint.
\end{itemize}

As for the running time: Computing $K_{ij}$ and $X_{ij}$ is done by multi-origin BFS%
\footnote{By multi-origin BFS we mean adding a dummy source vertex, connecting it by dummy edges to the set of origins, and running BFS from the dummy source.}
from $\cover^{-1}(Z_{ij})$ trimmed at depth $2r$, so the running time is linear in the number of edges touching $X_{ij}$.
For a fixed $1 \leq i \leq p$, the clusters $\{X_{ij}\}_j$ are disjoint, so the time to compute them is $O(m)$.
Hence, the overall running time is $\tilde{O}(m) + O(pm) = \tilde{O}(m)$.
This concludes the proof of~\Cref{thm:nbr-cover-with-sep}.

\end{document}